\newtheorem{thm}{Theorem}[section]
\newtheorem{cor}[thm]{Corollary}
\newtheorem{lem}[thm]{Lemma}
\newtheorem{prop}[thm]{Proposition}
\theoremstyle{definition}
\newtheorem{defn}[thm]{Definition}
\theoremstyle{remark}
\newtheorem{rem}[thm]{Remark}
\numberwithin{equation}{section}
\renewcommand{\vec}{\pmb}
\newcommand{\RR}{\mathbb{R}}                            
\newcommand{\Ela}{\mathbb{E}\mathrm{la}}                
\newcommand{\Piez}{\mathbb{P}\mathrm{iez}}              
\newcommand{\Perm}{\mathbb{P}\mathrm{erm}}              
\newcommand{\HH}{\mathbb{H}}                            
\newcommand{\TT}{\mathbb{T}}                            
\newcommand{\Sym}{\mathbb{S}}                           
\newcommand{\OO}{\mathrm{O}}                            
\newcommand{\SO}{\mathrm{SO}}                           
\newcommand{\Id}{\mathrm{Id}}                           
\newcommand{\ee}{\pmb{e}}                               
\newcommand{\ww}{\pmb{w}}                               
\newcommand{\xx}{\pmb{x}}                               
\newcommand{\bnu}{\pmb{\nu}}                            
\newcommand{\btau}{\pmb{\tau}}                          
\newcommand{\bepsilon}{\pmb{\epsilon}}
\newcommand{\bsigma}{\pmb{\sigma}}
\newcommand{\rp}{\mathrm{p}}                            
\newcommand{\id}{\mathbf{1}}
\newcommand{\ba}{\mathbf{a}}
\newcommand{\bd}{\mathbf{d}}
\newcommand{\bh}{\mathbf{h}}
\newcommand{\br}{\mathbf{r}}
\newcommand{\bs}{\mathbf{s}}
\newcommand{\bv}{\mathbf{v}}
\newcommand{\by}{\mathbf{y}}
\newcommand{\bz}{\mathbf{z}}
\newcommand{\bA}{\mathbf{A}}
\newcommand{\bB}{\mathbf{B}}
\newcommand{\lc}{\pmb\varepsilon}               
\newcommand{\bC}{\mathbf{C}}     		        
\newcommand{\bE}{\mathbf{E}}                    
\newcommand{\bS}{\mathbf{S}}                    
\newcommand{\bT}{\mathbf{T}}                    
\newcommand{\bD}{\mathbf{D}}                    
\newcommand{\bF}{\mathbf{F}}                    
\newcommand{\bP}{\mathbf{P}}
\DeclareMathOperator{\tr}{tr}
\newcommand{\norm}[1]{\left\Vert#1\right\Vert}  
\newcommand{\set}[1]{\left\{#1\right\}}         
\newcommand{\otimesbar}{\; \underline{\overline{\otimes}} \;}
\newcommand{\rcont}[1]{\overset{(#1)}{\cdot}} 
\begin{document}

\title[On the determination of plane and axial symmetries]{On the determination of plane and axial symmetries \protect\\ in linear Elasticity and Piezo-electricity}

\author{M. Olive}
\address[M. Olive]{Universit\'{e} Paris-Saclay, ENS Paris-Saclay, CNRS,  LMT - Laboratoire de M\'{e}canique et Technologie, 94235, Cachan, France}
\email{marc.olive@math.cnrs.fr}

\author{B. Desmorat}
\address[B. Desmorat]{Sorbonne Universit\'{e}, CNRS, Institut Jean Le Rond d'Alembert, UMR 7190, 75005 Paris, France \& Universit\'{e} Paris SUD, Orsay, France}
\email{boris.desmorat@sorbonne-universite.fr}

\author{B. Kolev}
\address[B. Kolev]{Universit\'{e} Paris-Saclay, ENS Paris-Saclay, CNRS,  LMT - Laboratoire de M\'{e}canique et Technologie, 94235, Cachan, France}
\email{boris.kolev@math.cnrs.fr}

\author{R. Desmorat}
\address[R. Desmorat]{Universit\'{e} Paris-Saclay, ENS Paris-Saclay, CNRS,  LMT - Laboratoire de M\'{e}canique et Technologie, 94235, Cachan, France}
\email{rodrigue.desmorat@ens-paris-saclay.fr}

\begin{abstract}
  We formulate necessary and sufficient conditions for a unit vector $\bnu$ to generate a plane or axial symmetry of a constitutive tensor. For the elasticity tensor, these conditions consist of two polynomial equations of degree lower than four in the components of $\bnu$. Compared to Cowin--Mehrabadi conditions, this is an improvement, since these equations involve only the normal vector $\bnu$ to the plane symmetry (and no vector perpendicular to $\bnu$). Similar reduced algebraic conditions are obtained for linear piezo-electricity and for totally symmetric tensors up to order $6$.
\end{abstract}

\keywords{Anisotropy; Elasticity; Material symmetry; Piezo-electricity.}
\subjclass[2010]{74E10 (74B05; 74F15)} 

\maketitle

\section{Introduction}
\label{sec:introduction}

In 1996, Forte and Vianello~\cite{FV1996} properly defined the eight \emph{symmetry classes} of Elasticity. Later, Chadwick et al \cite{CVC2001}, and then Ting \cite{Tin2003}, proved that the determination of these symmetry classes could be recovered by the calculation of the \emph{plane symmetries} of elasticity tensors (indeed, these reflections always generate their symmetry groups). This result is partially due to the fact that an elasticity tensor is invariant under a second-order rotation $\br(\bnu,\pi)$, \textit{i.e.} a rotation of angle $\pi$ around the unit vector $\bnu$, if and only if it is invariant under the plane symmetry $\bs(\bnu)=-\br(\bnu,\pi)$ across the orthogonal plane to $\bnu$. However, it has been pointed out by the same authors that the determination of the symmetry group of an odd-order tensor by the plane symmetry approach does not hold in general. This applies, in particular, to the piezo-electricity tensor and to higher order tensors. In these cases, determination of symmetry axes are necessary.

The natural polynomial equations which characterize vectors $\bnu$ which are normal to a symmetry plane of a given tensor of order $n$ are of degree $2n$ in the components of $\bnu$. For a totally symmetric tensor of order $n$, this algebraic system consists into $(n + 1) (n + 2)/2$ polynomial equations. For an elasticity tensor, it consists into 21 polynomial equations of degree 8 ranked in a fourth-order tensor (a tensor used to define the \emph{pole figures} as introduced in~\cite{FGB1998}). For a piezo-electricity tensor, we get 18 polynomial equations of degree 6 and for a symmetric second-order tensor $\ba$, 6 polynomial equations of degree 4. In this last case, however, it is well known that this algebraic system can be reduced to the simpler equation $(\ba \cdot \bnu)\times \bnu=0$ (where $\bnu$ is an eigenvector of $\ba$). This last condition is polynomial of degree $n$ in $\bnu$, instead of $2n$, and will thus be referred to as a \emph{reduced algebraic condition}. The question that we address in this paper is the following:
\begin{quote}
  \textit{$Q$: Can we find algebraic equations of lower order/degree that determine the plane/axial symmetries of a given constitutive tensor (and more generally of linear constitutive equations)?}
\end{quote}
In other words, how the usual conditions $(\ba \cdot \bnu)\times \bnu=0$ can be extended to higher order tensors?

For an elasticity tensor $\bE$, partial answers have been provided in~\cite{CM1987,Cow1989,Nor1989}. The study of longitudinal waves in anisotropic media~\cite{Sti1965,Kol1966,Fed1968} has indeed allowed Cowin and Mehrabadi to derive reduced necessary and sufficient conditions for the existence of a plane symmetry for $\bE$. These conditions were first expressed in theorem~\ref{thmCM1987}, using the two independent traces of $\bE$, the \emph{dilatation} tensor $\bd$ and the \emph{Voigt} tensor $\bv$:
\begin{equation}\label{eq:dilatation_voigt}
  \bd  = \tr_{12} \bE, \qquad \bv  = \tr_{13}\bE,
\end{equation}
which are symmetric second-order covariants of $\bE$ \cite{OKDD2018b} (in components, $d_{ij}=E_{kkij}$ and $v_{ij}=E_{kikj}$).

\begin{thm}[Cowin--Mehrabadi (1987)]\label{thmCM1987}
  Let $\bnu$ be a unit vector. Then, $\bnu$ is a normal to a symmetry plane of an elasticity tensor $\bE$, if and only if,
  \begin{equation*}
    \begin{cases}
      \left[ \left(\bnu\cdot \bE\cdot \bnu\right)\cdot \bnu\right] \times \bnu=0,
      \\
      \left[ \left(\btau\cdot \bE\cdot \btau\right)\cdot\bnu\right] \times \bnu=0,
      \\
      \left( \bd\cdot \bnu\right) \times \bnu=0,
      \\
      \left( \bv \cdot \bnu\right)\times \bnu=0,
    \end{cases}
  \end{equation*}
  for all unit vectors $\btau \perp \bnu$.
\end{thm}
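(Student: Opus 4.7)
The plan is to prove the two implications separately. For \emph{necessity}, I would assume $\bs(\bnu)\cdot\bE=\bE$ and observe that any contraction of $\bE$ with an even number of copies of $\bnu$ inherits the invariance, because the sign changes $\bs(\bnu)\bnu=-\bnu$ cancel in pairs. Consequently, the second-order tensors $\bA:=\bnu\cdot\bE\cdot\bnu$ and $\bB(\btau):=\btau\cdot\bE\cdot\btau$ (for $\btau\perp\bnu$, which is fixed by $\bs(\bnu)$), as well as the two traces $\bd$ and $\bv$ (since the trace commutes with the $\OO(3)$-action), all commute with $\bs(\bnu)$. Since $\bnu$ spans the $(-1)$-eigenline of $\bs(\bnu)$, each of these symmetric operators maps $\bnu$ into $\mathrm{span}(\bnu)$, which is exactly the four required identities $(\cdots\cdot\bnu)\times\bnu=0$.

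For \emph{sufficiency}, I would fix an orthonormal frame $(\bnu,\btau_1,\btau_2)$ adapted to $\bnu$ and translate the invariance $\bs(\bnu)\cdot\bE=\bE$ into components: it is equivalent to the vanishing of every entry $E_{ijkl}$ whose index tuple contains an \emph{odd} number of entries in the $\bnu$-direction. Using the minor and major symmetries, these independent ``odd'' components constitute an $8$-dimensional subspace of elasticity tensors, matching the codimension $21-13=8$ of the monoclinic stratum, and they split naturally into two blocks: the two ``three-ones'' components $C_{15}$ and $C_{16}$, and the six ``one-one'' components $C_{25},C_{26},C_{35},C_{36},C_{45},C_{46}$. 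The task then becomes to read (1)--(4) as linear equations on this $8$-dimensional subspace and show that the combined system has rank $8$.

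The book-keeping works cleanly: condition (1), applied to $\bA$ with entries $A_{jk}=E_{1jk1}$, yields the two equations $C_{15}=C_{16}=0$, handling the ``three-ones'' block; condition (2), being polynomial of degree $2$ in $\btau=\alpha\btau_1+\beta\btau_2$, splits into six equations (the two transverse components of $\bB(\btau)\bnu$, each expanded in $\alpha^2,\alpha\beta,\beta^2$) that force the six ``one-one'' components to vanish. Together (1)--(2) already exhaust the eight required vanishings, so (3) and (4) are then automatic -- consistent with the necessity already established. The main obstacle is the index-level verification: one must identify the correct Voigt component behind each scalar produced by the identities and confirm that the minor/major symmetries do not collapse the six equations from (2) into fewer than six independent ones. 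This partial redundancy within the Cowin--Mehrabadi statement is precisely the observation that motivates the sharper algebraic conditions pursued in the rest of the paper.
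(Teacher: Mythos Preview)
The paper does not actually prove this theorem: it is stated in the Introduction as a result from the literature \cite{CM1987}, alongside theorem~\ref{thm:Cowin89} from \cite{Cow1989,Nor1989}, precisely as motivation for the paper's own contribution (theorem~\ref{thm:elasticity}), which replaces the non-constructive condition on all $\btau\perp\bnu$ by purely $\bnu$-polynomial equations. So there is no ``paper's own proof'' to compare against.

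That said, your argument is sound and is essentially the classical route. The necessity part is clean: covariance of contractions and traces under $\OO(3)$, together with $\bs(\bnu)\bnu=-\bnu$ and $\bs(\bnu)\btau=\btau$, immediately forces the four second-order tensors to have $\bnu$ as an eigenvector. For sufficiency, your strategy of passing to an adapted frame and killing the eight ``odd'' Voigt components is correct and matches the original Cowin--Mehrabadi computation; you correctly observe that conditions (1)--(2) already exhaust the eight vanishings, so (3)--(4) are redundant --- which is exactly Norris's remark and the content of theorem~\ref{thm:Cowin89}. The one place you leave as ``to be verified'' (that the six scalar equations extracted from condition~(2) are independent) is genuine but routine: writing out $[(\btau\cdot\bE\cdot\btau)\cdot\bnu]\times\bnu$ for $\btau=\alpha\btau_{1}+\beta\btau_{2}$ and matching the $\alpha^{2},\alpha\beta,\beta^{2}$ coefficients against the six ``one-one'' components does give a triangular system, so no collapse occurs. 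Filling in that table completes the proof.
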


When non trivially satisfied, condition $\left( \bd \cdot \bnu\right) \times \bnu=\left( \bv \cdot \bnu\right)\times \bnu=0$ provides, as solution, the common eigenvectors of $\bd$ and $\bv$. It was observed in~\cite{Nor1989} that the third and fourth conditions are in fact consequences of the first two ones.

\begin{thm}[Cowin (1989)]\label{thm:Cowin89}
  Let $\bnu$ be a unit vector. Then $\bnu$ is a normal to a symmetry plane of an elasticity tensor $\bE$ if and only if
  \begin{equation*}
    \begin{cases}
      \left[ \left(\bnu\cdot \bE\cdot \bnu\right)\cdot\bnu\right]  \times \bnu=0,
      \\
      \left[ \left(\btau\cdot \bE\cdot \btau\right)\cdot\bnu\right] \times \bnu=0,
    \end{cases}
  \end{equation*}
  for all unit vectors $\btau \perp \bnu$.
\end{thm}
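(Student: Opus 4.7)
The forward direction is immediate from Theorem~\ref{thmCM1987}. For the converse, my strategy is to show that the two conditions of the theorem imply the remaining dilatation and Voigt conditions $(\bd\cdot\bnu)\times\bnu=0$ and $(\bv\cdot\bnu)\times\bnu=0$ of Theorem~\ref{thmCM1987}, after which Cowin--Mehrabadi supplies the conclusion.

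Write $\bE(\ba,\bb,\bc,\bd)=E_{ijkl}\,a_i b_j c_k d_l$. The two stated hypotheses are equivalent to $\bE(\bnu,\btau,\bnu,\bnu)=0$ and $\bE(\btau,\bmu,\bnu,\btau)=0$ for all $\btau,\bmu\perp\bnu$. The heart of the argument is to polarize the second identity. Define the trilinear form
\begin{equation*}
  h(\ba,\bb,\bc) := \bE(\ba,\bb,\bnu,\bc), \qquad \ba,\bb,\bc\perp\bnu.
\end{equation*}
The minor symmetry $E_{ijkl}=E_{jikl}$ makes $h$ symmetric in $(\ba,\bb)$, while polarizing $h(\btau,\bmu,\btau)=0$ in $\btau$ yields $h(\ba,\bb,\bc)=-h(\bc,\bb,\ba)$, i.e.\ antisymmetry in $(\ba,\bc)$. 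Alternating these two involutions six times returns $-h(\ba,\bb,\bc)$, so $h\equiv 0$. Combining with the other minor symmetry $E_{ijkl}=E_{ijlk}$ and the major symmetry $E_{ijkl}=E_{klij}$, we conclude that $\bE$ vanishes on every quadruple whose entries consist of $\bnu$ in exactly one slot and three vectors perpendicular to $\bnu$ in the remaining three slots.

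To finish, pick an orthonormal basis $(\bnu,\bmu_1,\bmu_2)$ and expand the traces: for any $\btau\perp\bnu$,
\begin{equation*}
  (\bd\cdot\bnu)\cdot\btau = \sum_{p=1}^{3}\bE(\ee_p,\ee_p,\btau,\bnu),
  \qquad
  (\bv\cdot\bnu)\cdot\btau = \sum_{p=1}^{3}\bE(\ee_p,\btau,\ee_p,\bnu),
\end{equation*}
where $\ee_1=\bnu$, $\ee_2=\bmu_1$, $\ee_3=\bmu_2$. After invoking the symmetries of $\bE$, the $p=1$ term in each sum reduces to $\bE(\bnu,\btau,\bnu,\bnu)$, which vanishes by the first hypothesis; the $p=2,3$ terms each involve exactly one $\bnu$-slot and three vectors perpendicular to $\bnu$, so they vanish by the previous step. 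Hence $\bd\cdot\bnu$ and $\bv\cdot\bnu$ are both parallel to $\bnu$, and Theorem~\ref{thmCM1987} yields the conclusion. The principal difficulty is the polarization step: it is not a priori transparent that the quadratic-in-$\btau$ hypothesis, combined only with the minor symmetries of $\bE$, forces the strong vanishing of $h$ on all of $\bnu^{\perp}\times\bnu^{\perp}\times\bnu^{\perp}$.
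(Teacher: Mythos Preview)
The paper does not supply a proof of this theorem: it is stated in the introduction as a known result attributed to Cowin~\cite{Cow1989}, with the remark that Norris~\cite{Nor1989} observed the dilatation and Voigt conditions of Theorem~\ref{thmCM1987} are consequences of the first two. Your argument is therefore not being compared against an in-paper proof, only against this one-line attribution.

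That said, your proof is correct and is precisely the reduction the paper alludes to. The polarization step is sound: with $h(\ba,\bb,\bc)=\bE(\ba,\bb,\bnu,\bc)$ restricted to $\bnu^{\perp}$, the minor symmetry in the first pair gives $h(\ba,\bb,\bc)=h(\bb,\ba,\bc)$, while polarizing $h(\btau,\bmu,\btau)=0$ yields $h(\ba,\bmu,\bc)=-h(\bc,\bmu,\ba)$; the standard six-cycle argument (symmetric in $(1,2)$ and antisymmetric in $(1,3)$ forces $h=-h$) then kills $h$. The subsequent use of the major and remaining minor symmetry to propagate the vanishing to all quadruples with exactly one $\bnu$-entry is correct, and the trace computations for $\bd$ and $\bv$ follow. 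One small presentational point: it may help the reader if you state explicitly which contraction convention you adopt for $\btau\cdot\bE\cdot\btau$ (acoustic-tensor convention $E_{ijkl}\tau_{j}\tau_{l}$ versus $E_{ijkl}\tau_{i}\tau_{j}$), since the two are not \emph{a priori} equivalent for the second hypothesis, even though your argument goes through for the acoustic-tensor reading and the alternative reading is in fact easier.
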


The Cowin--Mehrabadi conditions are indeed \emph{reduced conditions} in the sense that the first one is a polynomial equation of degree $n=4$ in $\bnu$, instead of degree $2n=8$ for the genuine algebraic conditions. Note, however that the second condition is not very constructive as it requires to check all unit vectors $\btau$ perpendicular to $\bnu$. This drawback is also present in the equivalent forms used by Jaric~\cite{Jar1994} and in the generalized Cowin-Mehrabadi theorems~\cite{Tin2003}. In the present paper, we provide a positive answer to question $Q$ for elasticity tensors, piezo-electricity tensors and totaly symmetric tensors of order 3 up to 6. General but more abstract results, for totally symmetric tensors of any order $n$, have been obtained in~\cite{ODKD2019}. Their proof uses deeply the isomorphism between totally symmetric tensors and homogeneous polynomials in three variables. Except for lemma~\ref{lem:Expr_Second_Order}, the proofs provided in the present paper involve only tensorial operations and are probably more accessible to the mechanical community.

The outline of the paper is the following. In~\autoref{sec:tensorial-operations}, we recall general notions about tensor spaces, such as \emph{covariant operations} on totally symmetric tensors and basic concepts on plane and axial symmetries. All these materials are required to write the proofs of our main results. The special case of symmetric second-order tensors is recalled in~\autoref{sec:second-order-tensors}, where these specific tensorial operations are highlighted. Totally symmetric tensors of order 3 to 6 are studied in~\autoref{sec:third-to-six-order-tensors} , the elasticity tensor in~\autoref{sec:elasticity-tensors} and the piezo-electricity tensor, in~\autoref{sec:piezoelectricity-tensors}.

\section{Tensors and plane/axial symmetries}
\label{sec:tensorial-operations}

In this section, we recall basic material on tensors and their symmetries. We introduce, furthermore, three basic tensorial operations on tensors; some are well-known in the mathematical community, such as the total symmetrization or the contraction and others are not, like the \emph{generalized cross product}. Finally, we formulate conditions, involving these tensorial operations, which characterize existence of plane/axial symmetries of tensors.

An $n$th-order tensor $\bT \in \TT^{n}(\RR^3)$ is an $n$-linear form
\begin{equation*}
  (\xx_{1},\dotsc,\xx_n)\in \RR^{3}\times \dotsc \times \RR^{3}\mapsto \bT(\xx_{1},\dotsc,\xx_n)\in \RR.
\end{equation*}
In any orthonormal basis $(\ee_{1},\ee_{2},\ee_{3})$, its components write $T_{i_{1} i_{2} \dotsc i_n} = \bT(\ee_{i_{1}}, \ee_{i_{2}}, \dotsc,\ee_{i_n} )$, since
\begin{equation*}
  \bT(\xx_{1},\xx_{2},\dotsc,\xx_n):=T_{i_{1} i_{2} \dotsc i_n} (\xx_{1})_{i_{1}} (\xx_{2})_{i_{2}}\dotsc( \xx_n)_{i_n}.
\end{equation*}
The natural action of an orthogonal transformation $g\in \OO(3)$ on $\bT\in \TT^{n}(\RR^3)$ is given by
\begin{equation}\label{eq:Def_Lin_Rep}
  (g\star\bT)(\xx_{1},\dotsc,\xx_n):=\bT(g^{-1}\xx_{1},\dotsc,g^{-1}\xx_n),
\end{equation}
or, in components in any orthonormal basis, as
\begin{equation*}
  (g \star \bT)_{i_{1}\dotsc i_n} = g_{i_{1}j_{1}}\dotsm g_{i_nj_n}T_{j_{1}\dotsc j_n}.
\end{equation*}
In the following, $g$ will be either the plane symmetry
\begin{equation}\label{eq:sdenu}
  \bs(\bnu):=\Id-2 \bnu \otimes \bnu, \qquad \norm{\bnu}=1,
\end{equation}
where $\bnu$ is a unit normal to the considered plane of symmetry and $\Id$ is the identity transformation, or the order-two rotation around the axis $\langle\bnu\rangle$,
\begin{equation*}
  \br(\bnu, \pi):=-\bs(\bnu)\in \SO(3).
\end{equation*}

It follows that a unit vector $\bnu$ is a normal to a symmetry plane of a given tensor $\bT$ if and only if
\begin{equation*}
  \bs(\bnu)\star \bT=\bT,
\end{equation*}
and that $\bnu$ spans a symmetry axis of $\bT$ if and only if
\begin{equation*}
  \br(\bnu, \pi)\star \bT=\bT,
\end{equation*}
both conditions being equivalent for even order tensors.

Let $\bnu$ be a unit vector, $g \in \OO(3)$ such that $g\bnu = \ee_{3}$, then,
\begin{equation*}
  \bs(\bnu) = \bs(g^{-1}\ee_{3}) = g^{-1}\bs(\ee_{3})g,
\end{equation*}
and thus, if $\bT\in \TT^{n}(\RR^3)$ is an $n$-th order tensor, we have
\begin{equation*}
  \bs(\bnu) \star \bT = \left(g^{-1}\bs(\ee_{3})g\right) \star \bT = g^{-1} \star \left( \bs(\ee_{3}) \star (g \star \bT) \right).
\end{equation*}
One has therefore the equivalence
\begin{equation}\label{eq:e3_instead_bnu}
  \bs(\bnu) \star \bT = \bT \iff \bs(\ee_{3})\star(g\star \bT) = g\star \bT,
\end{equation}
which means that $\bs(\bnu)$ is a plane symmetry of $\bT$ if and only if $\bs(\ee_{3})$ is a plane symmetry of $g\star \bT$.

\begin{rem}\label{rem:Sym_and_tensor_Coordinates}
  As it is well known in Elasticity, one can recast in terms of the components $T_{i_{1}\dotsc i_n}$ of $\bT\in \TT^{n}(\RR^3)$
  the conditions that $\bnu=\pmb{e}_{3}$ is a normal to a symmetry plane, or that
  $\langle \bnu \rangle=\langle\ee_{3}\rangle$  is a symmetry
  axis. More precisely, in an orthonormal basis  $(\ee_{1},\ee_{2},\ee_{3})$:
  \begin{enumerate}
    \item $\bs(\ee_{3})\star \bT=\bT$ if and only if
          \begin{equation*}
            T_{i_{1}\dotsc i_n} = 0,
          \end{equation*}
          whenever the occurrences of $3$ in the set $\set{i_{1},\dotsc ,i_n}$ is odd.
    \item For order $n$ odd, then $\br(\ee_{3},\pi)\star \bT=\bT$ if and only if
          \begin{equation*}
            T_{i_{1}\dotsc i_n}=0,
          \end{equation*}
          whenever the occurrences of $3$ in the set $\set{i_{1},\dotsc ,i_n}$ is even.
  \end{enumerate}
\end{rem}

The space $\Sym^{n}(\RR^3)$ of $n$th-order totally symmetric tensors is defined as the space of tensors $\bS \in \TT^{n}(\RR^3)$, such that $ \bS=\bS^{s}$ with
\begin{equation*}
  \bT^{s}(\xx_{1},\dotsc,\xx_{n}) : = \frac{1}{n!}\sum_{\sigma \in \mathfrak{S}_{n}} \bT(\xx_{\sigma(1)},\dotsc,\xx_{\sigma(n)}),
\end{equation*}
$\mathfrak{S}_{n}$ being the permutation group of $n$ elements.

\begin{rem}\label{rem:Total_Sym_Tens_Pol}
  There is a covariant isomorphism
  \begin{equation*}
    \bS\in \Sym^{n}(\RR^3)\mapsto \rp(\xx):=\bS(\xx,\dotsc,\xx),
  \end{equation*}
  which associates to each $n$th-order totally symmetric tensor $\bS$, an homogeneous polynomial $\rp$ of degree $n$ in $\xx=(x,y,z)\in \RR^3$. The inverse mapping is obtained by \emph{polarization} (see~\cite[p. 35]{Gor2017} or~\cite[Section 2.1]{OKDD2018a} for more details). Given an homogeneous polynomial $\rp(\xx)$ in $\xx=(x,y,z)$, one gets
  \begin{equation*}
    \bS(\xx_{1},\dotsc,\xx_{n}) = \frac{1}{n!} \left.\frac{\partial^{n}}{\partial t_{1} \dotsb \partial t_{n}}\right|_{t_{1} = \dotsb = t_{n} = 0}\rp(t_{1}\xx_{1} + \dotsb + t_{n}\xx_{n}).
  \end{equation*}
\end{rem}

Next, we introduce several tensorial operations between totally symmetric tensors.

\begin{defn}[Symmetric tensor product]
  The \emph{symmetric tensor product} between two totally symmetric tensors $\bS^{1} \in \Sym^{p}(\RR^{3})$ and $\bS^{2} \in \Sym^{q}(\RR^{3})$ is defined as
  \begin{equation*}
    \bS^{1} \odot \bS^{2} : = (\bS^{1} \otimes \bS^{2})^{s} \in \Sym^{p + q}(\RR^{3}).
  \end{equation*}
\end{defn}

Given a vector $\ww\in \RR^{3}$, we will write (as in~\cite{Qi2007})
\begin{equation*}
  \ww^{k} : = \underbrace{\ww \otimes \ww \otimes \dotsb \otimes \ww}_\textrm{$k$ times},
\end{equation*}
with components $\ww^{k}_{i_{1} i_{2}\dotsc i_k} = w_{i_{1}}w_{i_{2}}\dotsc w_{i_k}$. Of course,
$\ww \otimes \ww \otimes \dotsb \otimes \ww=\ww \odot \ww \odot \dotsb \odot \ww$.

\begin{defn}[$r$--contraction]\label{def:r-cont}
  The $r$-contraction between an $n$th-order symmetric tensor $\bS \in \Sym^{n}(\RR^3)$ and $\ww^r$ (where $r\leq n$) is defined by
  \begin{equation*}
    (\bS \rcont{r} \ww^r)_{i_{1}i_{2}\dotsc i_{n-r}} = S_{i_{1}i_{2}\dotsc i_{n-r}j_{1}\dotsc j_{r}}w_{j_{1}}\dotsm w_{j_r},
    \qquad \bS \rcont{r} \ww^r \in  \Sym^{n-r},
  \end{equation*}
  where Einstein's convention on repeated indices has been adopted.
\end{defn}

\begin{defn}[Generalized cross product \cite{DADKO2019}]
  The \emph{generalized cross product} between two totally symmetric tensors $\bS^{1} \in \Sym^{p}(\RR^{3})$ and $\bS^{2} \in \Sym^{q}(\RR^{3})$ is defined as
  \begin{equation}\label{eq:cross-product}
    \bS^{1} \times \bS^{2} : =  \left(\bS^{2}\cdot\lc \cdot \bS^{1}\right)^{s} \in \Sym^{p + q -1}(\RR^{3}),
  \end{equation}
  where $\lc$ is the Levi-Civita symbol in $\RR^3$.
\end{defn}

In any direct orthonormal basis $(\ee_i)$, it writes as
\begin{equation}\label{eq:Gen_Cross_Coord}
  (\bS^{1}\times\bS^{2})_{i_{1}\dotsb i_{p+q-1}} : = (\varepsilon_{i_{1}jk}S^{1}_{ji_{2}\dotsb i_{p}}S^{2}_{ki_{p+1} \dotsb i_{p+q-1}})^{s},
  \qquad
  \varepsilon_{ijk} = \det(\ee_i, \ee_j, \ee_j).
\end{equation}

\begin{rem}\label{rem:wkxw}
  Let $\ww$ be a vector and $\ww^{k} = \ww \otimes \ww \otimes \dotsb \otimes\ww$ be the tensor product of $k$ copies of $\ww$, then,
  \begin{equation*}
    \ww^{k} \times \ww=0.
  \end{equation*}
\end{rem}

We come now to some properties which relate tensorial symmetries with these tensorial operations.

\begin{lem}\label{lem:Expr_Second_Order}
  Let $\bS$ be any totally symmetric $n$th-order tensor. Then,
  \begin{equation}\label{eq:gTp0}
    \bs(\bnu)\star \bS=\sum_{k=0}^{n} \binom{n}{k} (-2)^{k}\bnu^{k} \odot \left(\bS\rcont{k} \bnu^{k}\right),
  \end{equation}
  and
  \begin{equation}\label{eq:gTp1}
    \br(\bnu,\pi)\star \bS=(-1)^{n}\bs(\bnu)\star\bS.
  \end{equation}
\end{lem}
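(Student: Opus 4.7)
The plan is to pass to the polynomial representation of Remark~\ref{rem:Total_Sym_Tens_Pol}, associate to $\bS$ the homogeneous polynomial $\rp(\xx)=\bS(\xx,\dotsc,\xx)$ of degree $n$, verify~\eqref{eq:gTp0} at the polynomial level, and then invert the polarization. Since $\bs(\bnu)$ is an involution, the definition~\eqref{eq:Def_Lin_Rep} of the action immediately gives $(\bs(\bnu)\star\bS)(\xx,\dotsc,\xx)=\bS(\bs(\bnu)\xx,\dotsc,\bs(\bnu)\xx)$. I would then substitute $\bs(\bnu)\xx=\xx-2(\bnu\cdot\xx)\bnu$ and expand by $n$-linearity; the total symmetry of $\bS$ collapses the $2^{n}$-term multinomial expansion into the binomial $\sum_{k=0}^{n}\binom{n}{k}(-2)^{k}(\bnu\cdot\xx)^{k}\bS(\bnu,\dotsc,\bnu,\xx,\dotsc,\xx)$, where $\bnu$ appears $k$ times and $\xx$ appears $n-k$ times.

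Three identifications then conclude the first identity. First, $(\bnu\cdot\xx)^{k}=\bnu^{k}(\xx,\dotsc,\xx)$. Second, $\bS(\bnu,\dotsc,\bnu,\xx,\dotsc,\xx)=(\bS\rcont{k}\bnu^{k})(\xx,\dotsc,\xx)$ by Definition~\ref{def:r-cont}. Third, the basic identity $(\bA\odot\bB)(\xx,\dotsc,\xx)=\bA(\xx,\dotsc,\xx)\,\bB(\xx,\dotsc,\xx)$ for the symmetric tensor product (which reflects that $\odot$ corresponds to polynomial multiplication under the isomorphism of Remark~\ref{rem:Total_Sym_Tens_Pol}) lets each term in the sum be recast as $[\bnu^{k}\odot(\bS\rcont{k}\bnu^{k})](\xx,\dotsc,\xx)$. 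Injectivity of the polarization map then yields~\eqref{eq:gTp0}.

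For~\eqref{eq:gTp1} the argument is essentially a parity count: the $n$-linear action satisfies $(c\,g)\star\bT=c^{-n}(g\star\bT)$ for any scalar $c$ and any $\bT\in\TT^{n}(\RR^{3})$, since each of the $n$ arguments picks up a factor $c^{-1}$ from $(cg)^{-1}=c^{-1}g^{-1}$. Taking $c=-1$ and $g=\bs(\bnu)$, and recalling that $\br(\bnu,\pi)=-\bs(\bnu)$, gives the desired identity.

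The main obstacle will be the bookkeeping in the expansion step: one must correctly account for the binomial coefficient, which arises because the total symmetry of $\bS$ absorbs all $\binom{n}{k}$ positional orderings of $k$ copies of $\bnu$ against $n-k$ copies of $\xx$ into a single term, and one must invoke the multiplicativity $(\bA\odot\bB)(\xx,\dotsc,\xx)=\bA(\xx,\dotsc,\xx)\,\bB(\xx,\dotsc,\xx)$, whose natural justification is the polynomial viewpoint. This reliance on the polynomial isomorphism is presumably what the authors have in mind when they single out this lemma as the exception to their otherwise purely tensorial proofs.
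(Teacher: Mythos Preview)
Your proposal is correct and follows essentially the same route as the paper: pass to the associated homogeneous polynomial via Remark~\ref{rem:Total_Sym_Tens_Pol}, expand $\bS(\bs(\bnu)\xx,\dotsc,\bs(\bnu)\xx)$ using $\bs(\bnu)\xx=\xx-2(\bnu\cdot\xx)\bnu$ and total symmetry to obtain the binomial sum, identify each summand as $\big(\bnu^{k}\odot(\bS\rcont{k}\bnu^{k})\big)(\xx,\dotsc,\xx)$, and conclude~\eqref{eq:gTp1} from $\br(\bnu,\pi)=-\bs(\bnu)$ and $n$-linearity. Your write-up is in fact slightly more explicit than the paper's, spelling out the three identifications and the parity mechanism behind~\eqref{eq:gTp1}.
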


\begin{proof}
  Taking account of remark~\ref{rem:Total_Sym_Tens_Pol}, it is sufficient to prove the lemma for the homogeneous polynomial,
  \begin{equation*}
    \rp(\xx) := \left(\bs(\bnu)\star \bS\right)(\xx,\dotsc,\xx) = \bS(\xx-2\langle\bnu,\xx\rangle \bnu,\dotsc, \xx-2\langle\bnu,\xx\rangle \bnu),
  \end{equation*}
  where $\langle \xx ,\bnu \rangle=\xx\cdot \bnu$ is the standard inner product on $\RR^3$, rather than for the totally symmetric tensor $\bs(\bnu)\star \bS$ itself. The tensor $\bS$ being totally symmetric, we get
  \begin{equation*}
    \rp(\xx) = \sum_{k=0}^{n} \binom{n}{k} (-2)^{k} \langle \bnu,\xx\rangle^{k} \bS(\underbrace{\bnu,\dotsc,\bnu}_{k \text{times}},\xx,\dotsc,\xx),
  \end{equation*}
  where
  \begin{equation*}
    \langle \bnu,\xx\rangle^{k} \bS(\bnu,\dotsc,\bnu,\xx,\dotsc,\xx) = \left(\bnu^{k} \odot \Big(\bS\rcont{k}\bnu^{k}\Big) \right)(\xx,\dotsc,\xx).
  \end{equation*}
  This leads to~\eqref{eq:gTp0}, and~\eqref{eq:gTp1} follows from $\br(\bnu,\pi)=-\bs(\bnu)$, which achieves the proof.
\end{proof}

We now use the fact that the $r$--contraction (see definition~\ref{def:r-cont}) is a covariant operation, which means that for any unit vector $\bnu$, and for any $\bS\in \Sym^{n}(\RR^3)$, $g\in \OO(3)$:
\begin{equation*}
  g\star \left( \bS \rcont{k} { \bnu^{ k}}\right)=\left(g\star\bS\right) \rcont{k} (g\bnu)^{k},
\end{equation*}
where $(g\bnu)^{k}=g\bnu\otimes \dotsc \otimes g\bnu$. Applying this property to $g=\bs(\bnu)$ leads to the following result.

\begin{lem}\label{lem:Sym_With_Contraction}
  Let $\bS\in \Sym^{n}(\RR^3)$ and $\bnu$ be a unit vector.
  \begin{enumerate}
    \item If $\bs(\bnu)\star \bS=\bS$  (\textit{i.e.} $\bnu$ defines a plane symmetry of $\bS$), then for any non--zero integer $k$
          \begin{equation*}
            \bs(\bnu)\star  \big(\bS \rcont{k} { \bnu^{k}}\big)=(-1)^{k} \,\bS \rcont{k} { \bnu^{ k}}.
          \end{equation*}
    \item If $\br(\bnu,\pi)\star \bS=\bS$  (\textit{i.e.} $\bnu$ defines an axial symmetry of $\bS$), then then for any non--zero integer $k$
          \begin{equation*}
            \br(\bnu,\pi)\star  \big(\bS \rcont{k} { \bnu^{k}}\big)=\bS \rcont{k} { \bnu^{ k}}.
          \end{equation*}
  \end{enumerate}
\end{lem}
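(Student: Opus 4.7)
The plan is to apply directly the covariance property of the $r$-contraction recalled just before the statement, namely
\begin{equation*}
  g \star \bigl(\bS \rcont{k} \bnu^{k}\bigr) = (g \star \bS) \rcont{k} (g\bnu)^{k},
\end{equation*}
for every $g \in \OO(3)$, and to choose $g = \bs(\bnu)$ in part (1) and $g = \br(\bnu,\pi)$ in part (2). The whole argument then reduces to computing how $\bnu$ itself transforms under these two orthogonal maps.

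For part (1), the key observation is that $\bs(\bnu)\bnu = (\Id - 2\bnu\otimes\bnu)\bnu = \bnu - 2\bnu = -\bnu$, by \eqref{eq:sdenu} and the fact that $\bnu$ is a unit vector. Consequently $(\bs(\bnu)\bnu)^{k} = (-1)^{k}\bnu^{k}$, and since by hypothesis $\bs(\bnu) \star \bS = \bS$, the covariance identity immediately yields
\begin{equation*}
  \bs(\bnu)\star\bigl(\bS \rcont{k} \bnu^{k}\bigr) = \bS \rcont{k} (-\bnu)^{k} = (-1)^{k}\, \bS \rcont{k} \bnu^{k}.
\end{equation*}

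For part (2), we use $\br(\bnu,\pi) = -\bs(\bnu)$, so that $\br(\bnu,\pi)\bnu = +\bnu$, hence $(\br(\bnu,\pi)\bnu)^{k} = \bnu^{k}$ for every $k$. Combined with the assumption $\br(\bnu,\pi) \star \bS = \bS$, the same covariance identity delivers $\br(\bnu,\pi)\star(\bS \rcont{k} \bnu^{k}) = \bS \rcont{k} \bnu^{k}$, as claimed.

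There is no real obstacle here: once the covariance of the $r$-contraction is accepted as a general fact (it follows immediately from unfolding the action \eqref{eq:Def_Lin_Rep} in components, noting that the contracted indices are paired by the invariant Euclidean metric), the proof is a one-line computation based on the eigenvector behavior $\bs(\bnu)\bnu = -\bnu$ and $\br(\bnu,\pi)\bnu = \bnu$. The only point worth double-checking is the parity bookkeeping of $(-1)^{k}$, which is precisely what distinguishes the plane-symmetry case from the axial-symmetry case.
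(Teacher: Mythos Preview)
Your proof is correct and is exactly the argument the paper intends: the paper does not even write out a separate proof, but simply states the covariance identity $g\star(\bS\rcont{k}\bnu^{k})=(g\star\bS)\rcont{k}(g\bnu)^{k}$ and remarks that ``applying this property to $g=\bs(\bnu)$ leads to the following result.'' Your computation of $\bs(\bnu)\bnu=-\bnu$ and $\br(\bnu,\pi)\bnu=\bnu$ is precisely the missing line that makes this explicit.
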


We have furthermore the property that
\begin{equation*}
  \bs(\bnu)\star \bS = \bS \implies \bnu\cdot(\bs(\bnu)\star \bS)=\bnu\cdot \bS.
\end{equation*}
Now, one way to rewrite the expanded formula for $\bnu\cdot(\bs(\bnu)\star \bS)$ in lemma~\ref{lem:Expr_Second_Order} in a more explicit form is to use the following contraction formula
\begin{equation}\label{eq:Contraction_Formula}
  \bnu \cdot \left(\bnu^{ k} \odot \Big(\bS \rcont{k} { \bnu^{ k}}\Big)\right)
  = \frac{1}{n} \left(k\, \bnu^{ k-1} \odot \Big(\bS \rcont{k}{ \bnu^{ k}}\Big)
  + (n-k)\, \bnu^{ k} \odot \Big(\bS \rcont{k+1} { \bnu^{ k+1}}\Big)\right).
\end{equation}
where $\bS$ is a totally symmetric $n$th-order tensor and $k\leq n-1$. A consequence of lemma~\ref{lem:Sym_With_Contraction} is the following.

\begin{cor}\label{cor:sTnumoinsTnu}
  Let $\bS\in \Sym^{n}(\RR^3)$ and $\bnu$ be a unit vector.
  \begin{enumerate}
    \item For any even integer $k=2p\leq n$, if $\bnu$ is a normal to a symmetry plane of $\bS$, then, it is also a normal to a symmetry plane of $\bS\rcont{2p} \bnu^{2p}\in \Sym^{n-2p}(\RR^3)$.
    \item For any odd order totally symmetric tensor $\bS\in \Sym^{2p+1}(\RR^3)$,
          \begin{equation*}
            \bs(\bnu)\star\bS = \bS \implies \bS(\bnu, \dotsc, \bnu)=\bS\rcont{2p+1} \bnu^{2p+1}=0.
          \end{equation*}
    \item  For any even order totally symmetric tensor $\bS\in \Sym^{2p}(\RR^3)$,
          \begin{equation*}
            \bs(\bnu)\star\bS = -\bS \implies \bS(\bnu, \dotsc, \bnu)=\bS\rcont{2p} \bnu^{2p}=0.
          \end{equation*}
  \end{enumerate}
\end{cor}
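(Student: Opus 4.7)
The plan is to derive all three statements as direct consequences of Lemma~\ref{lem:Sym_With_Contraction} together with the observation that the total contraction of an $n$th-order totally symmetric tensor with $\bnu^{n}$ yields a scalar, on which the $\OO(3)$-action is trivial.

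For statement (1), I would apply Lemma~\ref{lem:Sym_With_Contraction} directly: since $\bs(\bnu)\star \bS=\bS$ and $k=2p$ is even, one has
\begin{equation*}
  \bs(\bnu)\star\bigl(\bS\rcont{2p} \bnu^{2p}\bigr)=(-1)^{2p}\,\bS\rcont{2p} \bnu^{2p}=\bS\rcont{2p} \bnu^{2p},
\end{equation*}
which is exactly the statement that $\bnu$ is a normal to a plane symmetry of the $(n-2p)$th-order totally symmetric tensor $\bS\rcont{2p} \bnu^{2p}$.

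For statement (2), I would take $\bS\in\Sym^{2p+1}(\RR^3)$ and apply Lemma~\ref{lem:Sym_With_Contraction} with the odd integer $k=2p+1$. This gives
\begin{equation*}
  \bs(\bnu)\star\bigl(\bS\rcont{2p+1} \bnu^{2p+1}\bigr)=-\,\bS\rcont{2p+1} \bnu^{2p+1}.
\end{equation*}
But $\bS\rcont{2p+1} \bnu^{2p+1}=\bS(\bnu,\dotsc,\bnu)$ is a scalar (a $0$th-order tensor), on which $\bs(\bnu)$ acts trivially; hence the above equality forces $\bS(\bnu,\dotsc,\bnu)=-\bS(\bnu,\dotsc,\bnu)$, and therefore it vanishes.

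Statement (3) is handled by the same scalar-vanishing trick, but Lemma~\ref{lem:Sym_With_Contraction} is not directly applicable because the hypothesis is now $\bs(\bnu)\star \bS=-\bS$. I would instead invoke directly the covariance of the $r$-contraction together with $\bs(\bnu)\bnu=-\bnu$: for any $k$,
\begin{equation*}
  \bs(\bnu)\star\bigl(\bS\rcont{k} \bnu^{k}\bigr)=\bigl(\bs(\bnu)\star\bS\bigr)\rcont{k}(\bs(\bnu)\bnu)^{k}=(-1)^{k+1}\bS\rcont{k} \bnu^{k}.
\end{equation*}
Specialising to $k=2p=n$, the left-hand side acts trivially on the resulting scalar, so $\bS\rcont{2p} \bnu^{2p}=-\bS\rcont{2p} \bnu^{2p}=0$. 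No real obstacle is expected; the only subtlety is noticing that the top-order contraction produces a scalar, which is what turns an equality ``up to a sign'' into actual vanishing.
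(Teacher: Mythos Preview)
Your proof is correct and matches the paper's intent: the paper states the corollary without an explicit proof, merely introducing it as ``a consequence of lemma~\ref{lem:Sym_With_Contraction}''. You have filled in precisely the details one would expect, including the observation that the full contraction yields a scalar on which the action is trivial, and the remark that part~(3) requires revisiting the covariance identity underlying the lemma rather than the lemma as stated, since the hypothesis there is $\bs(\bnu)\star\bS=-\bS$ rather than $\bs(\bnu)\star\bS=\bS$.
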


\section{Plane/axial symmetries for second-order symmetric tensors}
\label{sec:second-order-tensors}

We will start our investigation by the case of second-order tensors, for which reduced equations for plane and axial symmetries are easy to understand. Let $\ba\in \Sym^2(\RR^3)$ be a second order symmetric tensor. The action of an orthogonal transformation $g\in \OO(3)$ on $\ba$ writes
\begin{equation*}
  g \star \ba=g\ba g^{t}, \qquad (g \star \ba)_{ij}=g_{ik}g_{jl} a_{kl},
\end{equation*}
where $(\cdot)^{t}$ stands for the transpose. Then, by lemma~\ref{lem:Expr_Second_Order}, we get
\begin{equation}\label{eq:gT2}
  \bs(\bnu)\star \ba=\ba-4 \bnu \odot (\ba \cdot \bnu)+4 ( \bnu\cdot \ba \cdot\bnu) \bnu \otimes \bnu,
\end{equation}
where $\bnu\cdot\ba \cdot\bnu=\ba\rcont{2}(\bnu\odot \bnu)$. Since $\ba$ is of even order, note that

\begin{equation*}
  \bs(\bnu)\star \ba=\br(\bnu,\pi)\star \ba.
\end{equation*}
Hence, any axial symmetry is a plane symmetry of $\ba$, and \emph{vice--versa}. This observation is moreover still true for any even order tensor. A direct application of~\eqref{eq:gTp0} leads to
\begin{equation}\label{eq:sa}
  \bs(\bnu)\star\ba=\br(\bnu,\pi)\star \ba=\ba\Longleftrightarrow (\ba\cdot\bnu)\odot \bnu-( \bnu\cdot \ba \cdot\bnu)\bnu\otimes \bnu =0,
\end{equation}
which is a condition for $\bnu$ to be a normal/axis of a plane/axial symmetry of $\ba$. The standard action~\eqref{eq:gT2} on $\ba$ gives thus 6 polynomial equations of degree $4$ in the coordinates of $\bnu=(x,y,z)\in \RR^3$. We shall see now that one can reduce them to 3 polynomial equations of degree $2$ in $\bnu$. Note that a distinction is made between tensors, for which condition \eqref{eq:axnu} applies, and pseudo-tensors, for which condition \eqref{eq:Second_order_flip} applies. This distinction will arise naturally in the harmonic decomposition of piezo-electricity tensor \cite{JCB1978,GW2002}, in~\autoref{sec:piezoelectricity-tensors}.

\begin{prop}\label{prop:Axial_Second_Order}
  Let $\ba\in\Sym^2(\RR^3)$ and $\bnu$ be a unit vector. Then
  \begin{enumerate}
    \item $\bs(\bnu)\star \ba=\ba$, \emph{i.e} $\bs(\bnu)$ is a plane symmetry (and $\br(\bnu,\pi)$ is an axial symmetry) of $\ba$ if and only if
          \begin{equation}\label{eq:axnu}
            (\ba\cdot \bnu)\times \bnu = 0,
          \end{equation}
          where $\times$ is the cross product.
    \item $\bs(\bnu)\star \ba=-\ba$, \emph{i.e} $\bs(\bnu)$ is a plane symmetry of the pseudo-tensor $\ba$, if and only if
          \begin{equation}\label{eq:Second_order_flip}
            \ba-2\,\bnu \odot  (\ba\cdot\bnu) = 0.
          \end{equation}
  \end{enumerate}
\end{prop}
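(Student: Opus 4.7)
The plan is to work directly from the explicit formula
\begin{equation*}
\bs(\bnu)\star\ba = \ba - 4\,\bnu\odot(\ba\cdot\bnu) + 4\,(\bnu\cdot\ba\cdot\bnu)\,\bnu\otimes\bnu,
\end{equation*}
derived in equation~\eqref{eq:gT2} as the second-order specialization of lemma~\ref{lem:Expr_Second_Order}, and to simplify each of the conditions $\bs(\bnu)\star\ba=\pm\ba$ by decomposing the vector $\ba\cdot\bnu$ into its components parallel and orthogonal to the unit vector $\bnu$.

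For part~(1), the equivalent condition~\eqref{eq:sa} reads $\bnu\odot(\ba\cdot\bnu)=(\bnu\cdot\ba\cdot\bnu)\,\bnu\otimes\bnu$. Writing $\ba\cdot\bnu=\alpha\bnu+\ww_\perp$, with $\alpha:=\bnu\cdot\ba\cdot\bnu$ and $\ww_\perp\cdot\bnu=0$, the $\alpha\,\bnu\otimes\bnu$ contributions cancel and the identity collapses to $\bnu\odot\ww_\perp=0$. Contracting this symmetric tensor with $\bnu$ gives $(\bnu\odot\ww_\perp)\cdot\bnu=\tfrac{1}{2}\ww_\perp$ (using $\bnu\cdot\bnu=1$ and $\ww_\perp\cdot\bnu=0$), hence $\ww_\perp=0$ and $\ba\cdot\bnu$ is collinear with $\bnu$, which is exactly $(\ba\cdot\bnu)\times\bnu=0$. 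The converse is immediate: if $\ba\cdot\bnu=\alpha\bnu$, then $\bnu\cdot\ba\cdot\bnu=\alpha$ and~\eqref{eq:sa} is satisfied. This rank-one argument is the only non-trivial step in the whole proof.

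For part~(2), the condition $\bs(\bnu)\star\ba=-\ba$ rewrites, after dividing by two, as
\begin{equation*}
\ba - 2\,\bnu\odot(\ba\cdot\bnu) + 2\,(\bnu\cdot\ba\cdot\bnu)\,\bnu\otimes\bnu = 0.
\end{equation*}
Double contraction with $\bnu\otimes\bnu$ yields $(1-2+2)\,\bnu\cdot\ba\cdot\bnu=0$, so $\bnu\cdot\ba\cdot\bnu=0$; alternatively, this is a direct application of corollary~\ref{cor:sTnumoinsTnu}(3) with $n=2$. The $\bnu\otimes\bnu$-term therefore disappears and equation~\eqref{eq:Second_order_flip} follows. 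Conversely, starting from $\ba=2\,\bnu\odot(\ba\cdot\bnu)$, a double contraction with $\bnu$ again yields $\bnu\cdot\ba\cdot\bnu=0$, so substituting back into~\eqref{eq:gT2} gives $\bs(\bnu)\star\ba-\ba=-2\ba$, that is $\bs(\bnu)\star\ba=-\ba$.
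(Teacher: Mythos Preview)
Your proof is correct and follows essentially the same route as the paper: both arguments start from the explicit formula~\eqref{eq:gT2} and reduce the conditions $\bs(\bnu)\star\ba=\pm\ba$ to the stated tensorial equations by extracting the scalar $\bnu\cdot\ba\cdot\bnu$. The only cosmetic differences are that, in part~(1), the paper obtains $\ba\cdot\bnu=(\bnu\cdot\ba\cdot\bnu)\bnu$ by invoking corollary~\ref{cor:sTnumoinsTnu} on the vector $\ba\cdot\bnu$ rather than by your orthogonal decomposition, and for the converse it appeals to the eigenvector interpretation rather than plugging back into~\eqref{eq:sa}; in part~(2) the two proofs are virtually identical.
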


\begin{rem}
  Condition~\eqref{eq:axnu} is well known. However, since its proof highlights the one for higher order tensors, we provide it anyway.
\end{rem}

\begin{proof}
  (1) Assume first that $\bs(\bnu)\star \ba=\ba$. Hence, by corollary~\ref{cor:sTnumoinsTnu} and the definition of $\bs(\bnu)$, we have
  \begin{equation*}
    \bs(\bnu)\star (\ba\cdot \bnu) =-\ba\cdot \bnu=\ba\cdot \bnu-2(\bnu\otimes \bnu)\cdot (\ba\cdot \bnu)=\ba\cdot \bnu-2(\bnu\cdot \ba\cdot \bnu)\bnu,
  \end{equation*}
  and thus
  \begin{equation*}
    \ba\cdot\bnu - ( \bnu\cdot \ba \cdot\bnu) \bnu = 0.
  \end{equation*}
  Taking the cross product of this expression with $\bnu$ leads to \eqref{eq:axnu}. Conversely, if $(\ba\cdot \bnu)\times \bnu = 0$, then $\bnu$ is an eigenvector of $\ba$, but any eigenvector of $\ba$ is normal to a symmetry plane of $\ba$, meaning that $\bs(\bnu)\star \ba=\ba$.

  (2) Assume now that $\bs(\bnu)\star \ba=-\ba$ so that~\eqref{eq:gT2} leads to
  \begin{equation*}
    \ba-2(\ba\cdot\bnu)\odot \bnu-2\left(\bnu\cdot \ba \cdot \bnu)\right)\bnu\otimes \bnu=0.
  \end{equation*}
  Using point (3) of corollary~\ref{cor:sTnumoinsTnu}, we deduce that $\bnu\cdot\ba\cdot \bnu=0$ and thus, we deduce~\eqref{eq:Second_order_flip}. Conversely, suppose that~\eqref{eq:Second_order_flip} holds. Contracting two times both sides of this equation with $\bnu$ and using~\eqref{eq:gT2}, we get
  \begin{equation*}
    \bs(\bnu)\star \ba=\ba-4 \bnu \odot (\ba \cdot \bnu) + 4 (\bnu\cdot \ba \cdot\bnu) \bnu \otimes \bnu=-\ba,
  \end{equation*}
  since $\bnu\cdot\ba\cdot\bnu=0$ (by contraction formula \ref{eq:Contraction_Formula}). This concludes the proof.
\end{proof}

\section{The case of totally symmetric tensors of order three to six}
\label{sec:third-to-six-order-tensors}

Let $\bS\in \Sym^{n}(\RR^3)$ be a totally symmetric tensor of order $n$. A necessary and sufficient condition for $\bnu$ ($\norm{\bnu}=1$) to be a normal to a symmetry plane of $\bS$ is
\begin{equation*}
  \bs(\bnu)\star \bS=\bS,
\end{equation*}
where $\bs(\bnu)=\Id-2 \bnu \otimes \bnu$. A necessary and sufficient condition for $\bnu$, to be the axis of an axial symmetry for an \emph{odd order} tensor $\bS$ is
\begin{equation*}
  \bs(\bnu)\star \bS=-\br(\bnu, \pi)\star \bS = -\bS,
\end{equation*}
where $\br(\bnu, \pi) = -\bs(\bnu)$. For even order tensors both conditions are equivalent. These equations are polynomial and of degree $2n$ in $\bnu$. They can be ranked into a totally symmetric tensor of order $n$ (leading to $(n+1)(n+2)/2$ scalar equations).

In order to formulate reduced algebraic equations, of lower degree in $\bnu$, determining plane and axial symmetries of $\bS$, we will use tensorial operations defined in~\autoref{sec:tensorial-operations} and, in particular, the generalized cross product~\eqref{eq:cross-product} between two totally symmetric tensors. For higher order tensors, see \cite{ODKD2019}. These reduced equations are obtained using similar ideas as the ones used in the proof of proposition~\ref{prop:Axial_Second_Order} for second-order tensors.

\begin{thm}\label{thm:S3plane}
  Let $\bS$ be a totally symmetric tensor of order 3 and $\bnu$ be a unit vector, then $\bs(\bnu)\star \bS=\bS$ (\textit{i.e.} $\bnu$ defines a plane symmetry of $\bS$) if and only if
  \begin{equation}\label{eq:condT3}
    \bS\cdot \bnu -2 \bnu \odot \left(\bnu\cdot \bS\cdot \bnu\right)=0.
  \end{equation}
\end{thm}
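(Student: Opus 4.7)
My plan is to reduce the third-order statement to the second-order pseudo-tensor case already treated in Proposition~\ref{prop:Axial_Second_Order}(2), exploiting the fact that $\bS \cdot \bnu = \bS \rcont{1}\bnu$ is itself a symmetric second-order tensor. The idea is to rewrite the single plane-symmetry equation for $\bS$ as an $r$-contraction identity involving $\bS \cdot \bnu$, then appeal to the already-proved second-order result; the converse will be obtained by contracting \eqref{eq:condT3} with $\bnu$ and substituting into the explicit expansion \eqref{eq:gTp0} of Lemma~\ref{lem:Expr_Second_Order}.

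For the necessity (``$\bs(\bnu)\star\bS=\bS \Rightarrow \eqref{eq:condT3}$''), I invoke Lemma~\ref{lem:Sym_With_Contraction} with $k=1$, which gives
\begin{equation*}
  \bs(\bnu)\star(\bS\cdot\bnu) = -\bS\cdot\bnu.
\end{equation*}
Thus $\bS\cdot\bnu$, viewed as a second-order symmetric tensor, satisfies precisely the pseudo-tensor plane-symmetry hypothesis of Proposition~\ref{prop:Axial_Second_Order}(2). That proposition then yields $\bS\cdot\bnu - 2\bnu \odot((\bS\cdot\bnu)\cdot\bnu) = 0$, and total symmetry of $\bS$ identifies $(\bS\cdot\bnu)\cdot\bnu = \bnu\cdot\bS\cdot\bnu$, giving exactly \eqref{eq:condT3}.

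For the sufficiency, I assume \eqref{eq:condT3} and write out \eqref{eq:gTp0} for $n=3$:
\begin{equation*}
  \bs(\bnu)\star\bS = \bS - 6\,\bnu\odot(\bS\cdot\bnu) + 12\,\bnu^{2}\odot(\bS\rcont{2}\bnu^{2}) - 8\,(\bS\rcont{3}\bnu^{3})\,\bnu^{3}.
\end{equation*}
First I contract \eqref{eq:condT3} once with $\bnu$: using $\bnu\cdot\bnu=1$ and $\bv\cdot\bnu = \bS\rcont{3}\bnu^{3}$ for $\bv=\bnu\cdot\bS\cdot\bnu$, a short computation collapses to $(\bS\rcont{3}\bnu^{3})\,\bnu=0$, hence $\bS\rcont{3}\bnu^{3}=0$, killing the last term. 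Next I use \eqref{eq:condT3} directly to substitute $\bS\cdot\bnu = 2\,\bnu\odot(\bS\rcont{2}\bnu^{2})$; associativity of $\odot$ and $\bnu\odot\bnu=\bnu^{2}$ yield $\bnu\odot(\bS\cdot\bnu) = 2\,\bnu^{2}\odot(\bS\rcont{2}\bnu^{2})$, so the middle two terms cancel and $\bs(\bnu)\star\bS=\bS$.

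The main obstacle is purely bookkeeping: verifying the contraction $\bnu \cdot [\bnu\odot(\bnu\cdot\bS\cdot\bnu)] = \tfrac{1}{2}(\bnu\cdot\bS\cdot\bnu) + \tfrac{1}{2}(\bS\rcont{3}\bnu^{3})\,\bnu$, together with the identity $\bnu\odot\bnu\odot(\bS\rcont{2}\bnu^{2}) = \bnu^{2}\odot(\bS\rcont{2}\bnu^{2})$, so that the binomial coefficients $-6$ and $+12$ cancel cleanly. Everything else follows mechanically from Lemmas~\ref{lem:Expr_Second_Order}, \ref{lem:Sym_With_Contraction} and Proposition~\ref{prop:Axial_Second_Order}(2).
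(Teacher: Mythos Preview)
Your proof is correct and follows essentially the same route as the paper's. The only cosmetic difference is that for the forward direction you invoke Proposition~\ref{prop:Axial_Second_Order}(2) on the second-order tensor $\bS\cdot\bnu$, whereas the paper unpacks that same computation inline (via Corollary~\ref{cor:sTnumoinsTnu} and the order-$2$ expansion); for the converse, both arguments contract~\eqref{eq:condT3} with $\bnu$ to obtain $\bS\rcont{3}\bnu^{3}=0$ and then substitute into the expansion~\eqref{eq:gTp0} to see the $-6$ and $+12$ terms cancel.
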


\begin{proof}
  Suppose first that $\bs(\bnu)\star \bS=\bS$. Then, by corollary~\ref{cor:sTnumoinsTnu} we get
  \begin{equation*}
    \bs(\bnu)\star (\bS\cdot \bnu) = -(\bS\cdot \bnu), \quad \text{and} \quad \bS\rcont{3}\bnu^{3}=0.
  \end{equation*}
  Finally, applying~\eqref{eq:gTp1} to the second-order tensor $\bS\cdot \bnu$ leads to
  \begin{equation*}
    \bS\cdot \bnu-2 \bnu \odot (\bS\rcont{2}\bnu^{2})+2 \bnu^{2} \odot  (\bS\rcont{3}\bnu^{3})=\bS\cdot \bnu-2 \bnu \odot (\bS\rcont{2}\bnu^{2}) = 0,
  \end{equation*}
  and hence we deduce~\eqref{eq:condT3}. Conversely, suppose~\eqref{eq:condT3} holds. Contacting twice this equation with $\bnu$ and using contraction formula \eqref{eq:Contraction_Formula} gives $(\bnu\cdot \bS\cdot \bnu)\cdot \bnu=0$. Finally, by~\eqref{eq:gTp0}, we obtain
  \begin{align*}
    \bs(\bnu)\star \bS & =\bS-6 \bnu \odot (\bS\cdot \bnu)+12 \bnu \odot \bnu \odot (\bnu\cdot \bS \cdot \bnu) -8 [( \bnu\cdot \bS \cdot\bnu)\cdot \bnu] \,\bnu \otimes \bnu \otimes \bnu , \\
                       & = \bS-6 \bnu \odot (\bS\cdot \bnu)+12 \bnu \odot \bnu \odot (\bnu\cdot \bS \cdot \bnu),                                                                            \\
                       & =\bS .
  \end{align*}
\end{proof}

The proofs of the remaining results follow the same lines: the reduced equations are obtained using corollary~\ref{cor:sTnumoinsTnu} (sometimes several times) together with formula~\eqref{eq:gTp0} or~\eqref{eq:gTp1} and remark~\ref{rem:wkxw} for $\ww=\bnu$. Besides, choosing $\bnu=\ee_{3}$ in the reduced equation, leads to simpler equations when using the components of $\bS$, so that we can conclude using remark~\ref{rem:Sym_and_tensor_Coordinates}. Next, we formulate these ideas for the axial symmetry of a totally symmetric third-order tensor.

\begin{thm}\label{thm:S3axial}
  Let $\bS$ be a totally symmetric tensor of order 3 and $\bnu$ be a unit vector, then $\br(\bnu,\pi)\star \bS=\bS$ (\emph{i.e} $\bnu$ defines an axis of symmetry of $\bS$) if and only if
  \begin{equation*}
    \Big[\bS -3 \bnu \odot (\bS\cdot \bnu)\Big]\times \bnu=0.
  \end{equation*}
\end{thm}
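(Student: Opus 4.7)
My plan is to adapt the structure of the proof of Theorem~\ref{thm:S3plane} to the axial-symmetry condition. The starting observation is that equation~\eqref{eq:gTp1} with $n=3$ gives $\br(\bnu,\pi)\star\bS=-\bs(\bnu)\star\bS$, so the hypothesis $\br(\bnu,\pi)\star\bS=\bS$ is equivalent to $\bs(\bnu)\star\bS=-\bS$; I would work with this equivalent form throughout.

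For the direct implication, I would first expand $\bs(\bnu)\star\bS$ via formula~\eqref{eq:gTp0} and equate the result to $-\bS$, which yields
\begin{equation*}
  2\bS - 6\bnu\odot(\bS\cdot\bnu) + 12\,\bnu^{2}\odot(\bnu\cdot\bS\cdot\bnu) - 8\lambda\,\bnu^{3} = 0, \qquad \lambda := \bS\rcont{3}\bnu^{3}.
\end{equation*}
The crucial simplification is to eliminate the quadratic contraction. Lemma~\ref{lem:Sym_With_Contraction}(2) applied with $k=2$ tells me that the vector $\bnu\cdot\bS\cdot\bnu = \bS\rcont{2}\bnu^{2}$ is invariant under $\br(\bnu,\pi)$; since $\br(\bnu,\pi)$ acts on vectors as $\bv \mapsto -\bv + 2(\bnu\cdot\bv)\bnu$, invariance forces $\bnu\cdot\bS\cdot\bnu = \lambda\,\bnu$. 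Substituting collapses the identity above to $\bS - 3\bnu\odot(\bS\cdot\bnu) = -2\lambda\,\bnu^{3}$, and taking the generalized cross product with $\bnu$ while invoking Remark~\ref{rem:wkxw} (so that $\bnu^{3}\times\bnu = 0$) delivers the desired reduced equation.

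For the converse, I would invoke the $\OO(3)$-equivariance captured in~\eqref{eq:e3_instead_bnu} to reduce to the case $\bnu = \ee_{3}$. Setting $\bU := \bS - 3\,\ee_{3}\odot(\bS\cdot\ee_{3})$, a direct inspection shows that $U_{ijk}$ vanishes whenever the triple $(i,j,k)$ contains an odd number of occurrences of the index~$3$. Expanding each symmetrized component of $\bU\times\ee_{3}$ via~\eqref{eq:Gen_Cross_Coord}, the nontrivial equations involve only the components of $\bS$ with an even number of $3$'s, and a careful component-by-component check yields exactly the six identities $S_{111} = S_{112} = S_{122} = S_{222} = S_{133} = S_{233} = 0$. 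By Remark~\ref{rem:Sym_and_tensor_Coordinates}(2), these characterize $\br(\ee_{3},\pi)\star\bS = \bS$, and equivariance then delivers $\br(\bnu,\pi)\star\bS = \bS$.

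The main obstacle will be the component bookkeeping in the converse: one must verify that the symmetrization built into~\eqref{eq:Gen_Cross_Coord} extracts \emph{exactly} the six vanishing conditions above, neither over- nor under-determining axial invariance. This is consistent with the dimension count, since axial symmetry leaves four free parameters ($S_{113},S_{123},S_{223},S_{333}$) out of $\dim\Sym^{3}(\RR^{3}) = 10$, matching $10-6 = 4$. A purely tensorial alternative would require first establishing a structural lemma of the form $\bW\times\bnu = 0$ in $\Sym^{3}(\RR^{3}) \Rightarrow \bW = a\,\Id\odot\bnu + b\,\bnu^{3}$, which appears to be of comparable technical cost to the coordinate computation.
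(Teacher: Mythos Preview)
Your proof is correct and follows essentially the same route as the paper's: in the forward direction you expand~\eqref{eq:gTp0}, invoke Lemma~\ref{lem:Sym_With_Contraction}(2) (you with $k=2$ directly, the paper with $k=1$ followed by~\eqref{eq:gT2}, which yields the equivalent identity $\bnu\odot(\bnu\cdot\bS\cdot\bnu)=\lambda\,\bnu^{2}$), collapse to $\bS-3\bnu\odot(\bS\cdot\bnu)=-2\lambda\,\bnu^{3}$, and cross with $\bnu$; the converse is the identical component computation at $\bnu=\ee_{3}$ invoking Remark~\ref{rem:Sym_and_tensor_Coordinates}(2) and~\eqref{eq:e3_instead_bnu}. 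One small slip: your side claim that $U_{ijk}$ vanishes whenever $(i,j,k)$ has an odd number of $3$'s fails for $U_{333}=-2S_{333}$, but this is harmless since $(\bU\times\ee_{3})_{333}=0$ identically by antisymmetry of $\lc$.
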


\begin{proof}
  Suppose first that $\br(\bnu,\pi) \star \bS=-\bs(\bnu) \star \bS=\bS$, so~\eqref{eq:gTp1} leads to
  \begin{equation}\label{eq:Non_Reduced_Sym3}
    \bS-3 \bnu \odot (\bS\cdot \bnu)+6 \bnu \odot \Big( \bnu \odot (\bnu\cdot \bS \cdot \bnu) -4 [( \bnu\cdot \bS \cdot\bnu)\cdot \bnu] \, \bnu \otimes \bnu \Big)=0.
  \end{equation}
  Now by lemma~\ref{lem:Sym_With_Contraction} and~\eqref{eq:gT2}, one has
  \begin{equation*}
    \bs(\bnu) \star (\bS\cdot \bnu)=\bS\cdot \bnu=\bS\cdot \bnu-4 \bnu \odot ((\bS\cdot \bnu) \cdot \bnu)+4 ( \bnu\cdot (\bS\cdot \bnu) \cdot\bnu) \bnu \otimes \bnu ,
  \end{equation*}
  and $\bS$ being totally symmetric, we deduce that
  \begin{equation*}
    \bnu \odot (\bnu\cdot \bS \cdot \bnu)=[( \bnu\cdot \bS \cdot\bnu)\cdot \bnu] \,\bnu \otimes \bnu .
  \end{equation*}
  Hence,~\eqref{eq:Non_Reduced_Sym3} recasts as
  \begin{equation*}
    \bS-3 \bnu \odot (\bS\cdot \bnu)+2 [( \bnu\cdot \bS \cdot\bnu)\cdot \bnu] \,\bnu \otimes \bnu \otimes \bnu =0.
  \end{equation*}
  Finally, by remark~\ref{rem:wkxw}, the fact that $( \bnu\cdot \bS \cdot\bnu)\cdot \bnu\in \RR$, and taking the generalized cross-product with $\bnu$, we get
  \begin{equation*}
    \big[\bS -3 \bnu \odot (\bS\cdot \bnu)\big]\times \bnu=0.
  \end{equation*}
  Conversely, suppose that $\bnu=\ee_{3}$ and that the third-order tensor
  \begin{equation*}
    \bB= \big[\bS -3 \ee_{3} \odot (\bS\cdot \ee_{3})\big]\times \ee_{3}
  \end{equation*}
  vanishes. Then, using~\eqref{eq:Gen_Cross_Coord}, we have
  \begin{align*}
    B_{111} & = S_{112}=0,
            &                           & B_{112}= \frac{1}{3}(2 S_{122}-S_{111})=0,
            &                           & B_{122}= \frac{1}{3}(S_{222}-2 S_{112})=0,
    \\
    B_{133} & = -\frac{1}{3} S_{233}=0,
            &                           & B_{222}= -S_{122}=0,
            &                           & B_{233}= \frac{1}{3} S_{133}=0.
  \end{align*}
  Hence, $S_{i_{1} i_{2} i_{3}}=0$, whenever the occurrence of $3$ in the set $\set{i_{1},i_{2},i_{3}}$ is even, and thus $\ee_{3}$ defines an axis of symmetry of $\bS$ (by remark~\ref{rem:Sym_and_tensor_Coordinates}). We conclude using~\eqref{eq:e3_instead_bnu}.
\end{proof}

For totally symmetric tensors of order four, five and six, we have the following results.

\begin{thm}\label{thm:S4toS6}
  Let $\bnu$ be a unit vector.
  \begin{enumerate}
    \item For any $\bS\in \Sym^{4}(\RR^3)$, then, $\bnu$ defines a plane symmetry --- or an axial symmetry --- of $\bS$ if and only if
          \begin{equation*}
            \Big[\bS\cdot \bnu -3 \bnu \odot \left(\bnu\cdot \bS\cdot \bnu\right)\Big]\times \bnu=0.
          \end{equation*}
    \item For any $\bS\in \Sym^{5}(\RR^3)$, then, $\bnu$ defines a plane symmetry of $\bS$ if and only if
          \begin{equation*}
            \bS\cdot \bnu -4 \bnu \odot \left(\bnu\cdot \bS\cdot \bnu\right)
            + 4 \bnu \odot \bnu \odot
            \big(\left(\bnu\cdot \bS\cdot \bnu\right)\cdot \bnu\big) = 0.
          \end{equation*}
    \item For any $\bS\in \Sym^{5}(\RR^3)$, then, $\bnu$ defines an axial symmetry of $\bS$ if and only if
          \begin{equation*}
            \Big[\bS -5 \bnu \odot (\bS\cdot \bnu)
              + \frac{20}{3} \bnu \odot \bnu \odot
              (\bnu\cdot \bS\cdot \bnu)
              \Big]\times \bnu = 0.
          \end{equation*}
    \item For any $\bS\in \Sym^{6}(\RR^3)$, then, $\bnu$ defines a plane symmetry --- or an axial symmetry --- of $\bS$ if and only if
          \begin{equation*}
            \big[\bS\cdot \bnu -5 \bnu \odot \left(\bnu\cdot \bS\cdot \bnu\right)
              + \frac{20}{3} \bnu \odot \bnu \odot
              \big(\left(\bnu\cdot \bS\cdot \bnu\right)\cdot \bnu\big)
              \big]\times \bnu = 0.
          \end{equation*}
  \end{enumerate}
\end{thm}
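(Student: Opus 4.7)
The plan is to adapt the template established in the proofs of Theorems~\ref{thm:S3plane} and~\ref{thm:S3axial}. For the necessity direction in each of the four cases, I will start from the symmetry relation $\bs(\bnu)\star\bT=\pm\bT$ (for $\bT=\bS$ or for $\bT=\bS\cdot\bnu$), expand it using formula~\eqref{eq:gTp0}, and reduce by combining four tools: (i)~Lemma~\ref{lem:Sym_With_Contraction} to control the $\bs(\bnu)$-action on each lower-order contraction $\bS\rcont{k}\bnu^{k}$; (ii)~Corollary~\ref{cor:sTnumoinsTnu} to annihilate the top-order scalar contractions when parity forbids them; (iii)~Proposition~\ref{prop:Axial_Second_Order} (for order-two contractions) and Theorem~\ref{thm:S3axial} (for order-three contractions) to produce auxiliary identities between different contractions of $\bS$ by $\bnu$; and (iv)~Remark~\ref{rem:wkxw} so that leftover terms proportional to $\bnu^{k}$ are killed by the final cross product with $\bnu$. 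For the sufficiency direction, I will use the equivalence~\eqref{eq:e3_instead_bnu} to reduce to $\bnu=\ee_{3}$, expand the reduced identity in components, and appeal to Remark~\ref{rem:Sym_and_tensor_Coordinates} to read off that the parity-forbidden components of $\bS$ vanish, exactly as in the final lines of the proof of Theorem~\ref{thm:S3axial}.

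\textbf{Case by case.} For~(1), run the template on $\bT=\bS\cdot\bnu\in\Sym^{3}$, which is a pseudo-tensor under $\bs(\bnu)$; Proposition~\ref{prop:Axial_Second_Order}(1) applied to the proper contraction $\bnu\cdot\bS\cdot\bnu\in\Sym^{2}$ gives $(\bnu\cdot\bS\cdot\bnu)\cdot\bnu\parallel\bnu$, so the residual $\bnu^{2}\odot((\bnu\cdot\bS\cdot\bnu)\cdot\bnu)$ term becomes proportional to $\bnu^{3}$ and is killed by $\times\bnu$. For~(2), run the template on $\bT=\bS\cdot\bnu\in\Sym^{4}$: the top scalar vanishes by Corollary~\ref{cor:sTnumoinsTnu}(2), and Proposition~\ref{prop:Axial_Second_Order}(2) applied to the pseudo contraction $(\bnu\cdot\bS\cdot\bnu)\cdot\bnu$ yields the key identity $(\bnu\cdot\bS\cdot\bnu)\cdot\bnu=2\bnu\odot(\bS\rcont{4}\bnu^{4})$, which merges the $\bnu^{2}\odot(\cdot)$ and $\bnu^{3}\odot(\cdot)$ terms into the stated reduced form (this is why statement~(2) is not of cross-product type: no $\times\bnu$ step is needed). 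For~(3), expand~\eqref{eq:gTp0} directly on $\bS$ with $\bs(\bnu)\star\bS=-\bS$; Proposition~\ref{prop:Axial_Second_Order}(1) shows $\bS\rcont{4}\bnu^{4}\parallel\bnu$, so the $\bnu^{4}\odot(\cdot)$ and $\bnu^{5}$ terms collapse to a multiple of $\bnu^{5}$ that dies under $\times\bnu$, while Theorem~\ref{thm:S3axial} applied to the axial $\Sym^{3}$-tensor $\bnu\cdot\bS\cdot\bnu$ provides the identity combining the remaining $\bnu^{2}\odot(\cdot)$ and $\bnu^{3}\odot(\cdot)$ contributions, after $\times\bnu$, into the single $(20/3)\bnu^{2}\odot(\bnu\cdot\bS\cdot\bnu)$ term. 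Finally, statement~(4) follows from~(3): by Lemma~\ref{lem:Sym_With_Contraction}, $\bs(\bnu)\star\bS=\bS$ on $\Sym^{6}$ is equivalent to $\br(\bnu,\pi)\star(\bS\cdot\bnu)=\bS\cdot\bnu$ on $\Sym^{5}$, so applying~(3) to $\bS\cdot\bnu$ reproduces the stated identity; for the converse, that identity together with the total symmetry of $\bS$ exhausts all parity-forbidden components of $\bS$ via Remark~\ref{rem:Sym_and_tensor_Coordinates}.

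\textbf{Expected obstacle.} The main difficulty is the combinatorial bookkeeping in statement~(3): three intermediate contractions must be tracked simultaneously, and the precise coefficient $20/3$ arises from ``lifting'' the Theorem~\ref{thm:S3axial} identity, an equation in $\Sym^{3}$, up to $\Sym^{5}$ by symmetric tensor product with $\bnu^{2}$ and then checking that this lifting is compatible with the generalized cross product defined by~\eqref{eq:cross-product}. This compatibility, rather than coefficient tracking, is the only non-routine step; the sufficiency directions in all four cases are then finite componentwise calculations in the spirit of the last paragraph of the proof of Theorem~\ref{thm:S3axial}.
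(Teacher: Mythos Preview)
Your proposal is correct and matches the paper's sketch of proof: sufficiency by componentwise verification with $\bnu=\ee_{3}$ and Remark~\ref{rem:Sym_and_tensor_Coordinates} (the paper writes out the tensors $\bC,\bD,\bE,\bF$ explicitly and reads off the parity-forbidden components), and necessity as a variation on the argument for Theorem~\ref{thm:S3axial}. Your necessity treatment is in fact more detailed than the paper's, which simply says ``very similar to the one of theorem~\ref{thm:S3axial}''; the compatibility obstacle you correctly flag in case~(3) dissolves if, instead of lifting the cross-product identity of Theorem~\ref{thm:S3axial} directly, you use the \emph{pre}-cross-product identity extracted in its proof (the line $\bT-3\bnu\odot(\bT\cdot\bnu)+2[(\bnu\cdot\bT\cdot\bnu)\cdot\bnu]\,\bnu^{3}=0$ for $\bT=\bnu\cdot\bS\cdot\bnu$), tensor it with $\bnu^{2}$, substitute into the $\Sym^{5}$ expansion, and take $\times\bnu$ only at the end.
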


\begin{proof}[Sketch of proof]
  As the proof is very similar to the one of theorem~\ref{thm:S3axial}, we only sketch it. We have only to observe that setting $\bnu=\ee_{3}$, we have
  \begin{align*}
     & \bC=\Big[\bS\cdot \ee_{3} -3 \ee_{3} \odot \left(\ee_{3}\cdot \bS\cdot \ee_{3}\right)\Big]\times \ee_{3},
    \\
     & \bD=\bS\cdot \ee_{3} -4 \ee_{3} \odot \left(\ee_{3}\cdot \bS\cdot \ee_{3}\right) + 4 \ee_{3} \odot \ee_{3} \odot
    \big(\left(\ee_{3}\cdot \bS\cdot \ee_{3}\right)\cdot \ee_{3}\big),
    \\
     & \bE=
    \Big[\bS -5 \ee_{3} \odot (\bS\cdot \ee_{3})
      + \frac{20}{3} \ee_{3} \odot \ee_{3} \odot
      (\ee_{3}\cdot \bS\cdot \ee_{3})
      \Big]\times \ee_{3},
    \\
     & \bF=\big[\bS\cdot \ee_{3} -5 \ee_{3} \odot \left(\ee_{3}\cdot \bS\cdot \ee_{3}\right)
      + \frac{20}{3} \ee_{3} \odot \ee_{3} \odot
      \big(\left(\ee_{3}\cdot \bS\cdot \ee_{3}\right)\cdot \ee_{3}\big)
      \big]\times \ee_{3},
  \end{align*}
  where $\bC$, $\bD$, $\bE$ and $\bF$ are totally symmetric tensors, of respective orders three, four, five and six. The non-vanishing independent components of $\bC$ are given by
  \begin{align*}
    C_{111} & = S_{1123},
            &                          & C_{112}= \frac{1}{3} (2 S_{1223}-S_{1113}),
            &                          & C_{122}= \frac{1}{3} (S_{2223}-2 S_{1123}),
    \\
    C_{133} & = -\frac{1}{3} S_{2333},
            &                          & C_{222}= -S_{1223},
            &                          & C_{233}= \frac{1}{3} S_{1333}.
  \end{align*}
  The non-vanishing independent components of $\bD$ are given by
  \begin{align*}
     & D_{1111}= S_{11113},
     &                                    & D_{1112}= S_{11123},
     &                                    & D_{1122}= S_{11223},
     &                                    & D_{1133}= -\frac{1}{3} S_{11333},
     &                                    & D_{1222}= S_{12223},
    \\
     & D_{1233} = -\frac{1}{3} S_{12333},
     &                                    & D_{2222}= S_{22223},
     &                                    & D_{2233}= -\frac{1}{3} S_{22333},
     &                                    & D_{3333}= S_{33333}.
  \end{align*}
  The non-vanishing independent components of $\bE$ are given by
  \begin{align*}
     & E_{11111} = S_{11112},
     &                                                  & E_{11112}= \frac{1}{5} (4 S_{11122}-S_{11111}),
     &                                                  & E_{11122}= \frac{1}{5} (3 S_{11222}-2 S_{11112}),
    \\
     & E_{11133} = -\frac{1}{5} S_{11233},
     &                                                  & E_{11222}= \frac{1}{5} (2 S_{12222}-3 S_{11122}),
     &                                                  & E_{11233}= \frac{1}{15} (S_{11133}-2 S_{12233}),
    \\
     & E_{12222} = \frac{1}{5} (S_{22222}-4 S_{11222}),
     &                                                  & E_{12233}= \frac{1}{15} (2 S_{11233}-S_{22233}),
     &                                                  & E_{13333} = \frac{1}{5} S_{23333},
    \\
     & E_{22222}= -S_{12222},
     &                                                  & E_{22233}= \frac{1}{5} S_{12233},
     &                                                  & E_{23333}= -\frac{1}{5} S_{13333}.
  \end{align*}
  The non-vanishing independent components of $\bF$ are given by
  \begin{align*}
     & F_{11111}= S_{111123},
     &                                                   & F_{11112}= \frac{1}{5} (4 S_{111223}-S_{111113}),
     &                                                   & F_{11122}= \frac{1}{5} (3 S_{112223}-2 S_{111123}),
    \\
     & F_{11133}= -\frac{1}{5} S_{112333},
     &                                                   & F_{11222}= \frac{1}{5} (2 S_{122223}-3 S_{111223}),
     &                                                   & F_{11233}= \frac{1}{15} (S_{111333}-2 S_{122333}),
    \\
     & F_{12222}= \frac{1}{5} (S_{222223}-4 S_{112223}),
     &                                                   & F_{12233}= \frac{1}{15} (2 S_{112333}-S_{222333}),
     &                                                   & F_{13333}= \frac{1}{5} S_{233333},
    \\
     & F_{22222}= -S_{122223},
     &                                                   & F_{22233}= \frac{1}{5} S_{122333},
     &                                                   & F_{23333}= -\frac{1}{5} S_{133333}.
  \end{align*}
  The proof reduces then to check that if any of these tensors vanishes, then, the initial tensor $\bS$ has plane/axial symmetries, and conversely, which is a consequence of remark~\ref{rem:Sym_and_tensor_Coordinates} and~\eqref{eq:e3_instead_bnu}.
\end{proof}

\section{Plane symmetries of Elasticity tensors}
\label{sec:elasticity-tensors}

A linear elasticity tensor $\bE\in \Ela$ is defined as a fourth-order tensor having the major and the minor index symmetries,
$E_{ijkl}=E_{jikl}=E_{klij}$. Let $\bE^{s}\in \Sym^{4}(\RR^3)$ be its totally symmetric part and $\bA$ be its \emph{asymmetric} part (in the sense of Backus~\cite{Bac1970}). Their components write as follows
\begin{align*}
   & E^{s}_{ijkl}=\frac{1}{3}\left( E_{ijkl}+ E_{ikjl}+ E_{iljk}\right),
  \\
   & A_{ijkl}=\frac{1}{3}\left( 2 E_{ijkl}- E_{ikjl}- E_{iljk}\right).
\end{align*}
Set
\begin{equation}\label{eq:Def_Comp_a}
  \ba :=  2(\bd'-\bv')+ \frac{1}{6}(\tr \bd-\tr\bv)\, \id,
\end{equation}
where $\bd' = \bd-\frac{1}{3}(\tr \bd) \id$ and $\bv' = \bv-\frac{1}{3}(\tr \bv) \id$ are the deviatoric (\textit{i.e.} harmonic) parts of the dilatation and the Voigt tensors defined by \eqref{eq:dilatation_voigt}. We get then
\begin{equation*}
  \bA = \id \otimes_{2,2} \ba,
\end{equation*}
where $\otimes_{2,2}$ is the Young-symmetrized tensor product defined as in~\cite{OKDD2018a}, by
\begin{equation*}
  \by \otimes_{(2,2)}\!\bz = \frac{1}{3} \big( \by \otimes \bz +  \bz \otimes \by
  - \by \otimesbar \bz - \bz \otimesbar \by \big),\quad (\by \otimesbar \bz)_{ijkl} : = \frac{1}{2} (y_{ik}z_{jl} + y_{il}z_{jk}).
\end{equation*}
We can write then $\bE \simeq (\bE^s, \ba)$, where $\bE^s\in \Sym^{4}(\RR^3)$ and $\ba \in \Sym^2(\RR^3)$ are totally symmetric tensors. Moreover, this decomposition, related to the harmonic decomposition of $\Ela$, is equivariant, which means that
\begin{equation*}
  g\star \bE \simeq (g\star\bE^{s}, g\star \ba)\quad \forall g \in \OO(3).
\end{equation*}

\begin{thm}\label{thm:elasticity}
  Let $\bE$ be an elasticity tensor, $\bE^{s}$ be its totally symmetric part and $\bd=\tr_{12}\bE$ be the dilatation tensor. Necessary and sufficient conditions for $\bnu$ to be a normal to a symmetry plane --- or equivalently to be an axis of symmetry --- for $\bE$ are
  \begin{equation}\label{eq:CondEla}
    \begin{cases}
      \Big[\bE^{s}\cdot \bnu -3 \bnu \odot \left(\bnu\cdot \bE^s\cdot \bnu\right)\Big]\times \bnu=0,
      \\
      \left(\bd\cdot \bnu\right)\times \bnu=0.
    \end{cases}
  \end{equation}
\end{thm}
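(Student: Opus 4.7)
The plan is to exploit the equivariant decomposition $\bE \simeq (\bE^s, \ba)$, where $\bE^s \in \Sym^{4}(\RR^3)$ is the totally symmetric part of $\bE$ and $\ba \in \Sym^{2}(\RR^3)$ is the second-order tensor defined by~\eqref{eq:Def_Comp_a}. Since this splitting commutes with the $\OO(3)$-action, $\bs(\bnu) \star \bE = \bE$ is equivalent to the pair
\begin{equation*}
  \bs(\bnu) \star \bE^s = \bE^s
  \quad \text{and} \quad
  \bs(\bnu) \star \ba = \ba.
\end{equation*}
By theorem~\ref{thm:S4toS6} (item~1) applied to the fourth-order totally symmetric tensor $\bE^s$, the first of these is exactly the first equation in~\eqref{eq:CondEla}. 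By proposition~\ref{prop:Axial_Second_Order} (item~1) applied to $\ba$, the second one is equivalent to $(\ba\cdot\bnu)\times \bnu = 0$. It therefore remains to show that, given the first condition in~\eqref{eq:CondEla}, the condition $(\ba\cdot\bnu)\times \bnu = 0$ can be replaced by $(\bd\cdot\bnu)\times \bnu = 0$.

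For this, I would first expand~\eqref{eq:Def_Comp_a} using $\bd' = \bd - \tfrac{1}{3}(\tr \bd)\,\id$ and $\bv' = \bv - \tfrac{1}{3}(\tr \bv)\,\id$ to obtain the closed-form identity
\begin{equation*}
  \ba = 2(\bd - \bv) - \tfrac{1}{2}(\tr \bd - \tr \bv)\,\id.
\end{equation*}
Since $\bnu \times \bnu = 0$, the $\id$-part drops out of $(\ba\cdot \bnu)\times \bnu$, so this condition is equivalent to $\bnu$ being an eigenvector of $\bd - \bv$. Next, since the partial trace $\tr_{12}$ is covariant, $\bs(\bnu)\star \bE^s = \bE^s$ propagates to $\bs(\bnu)\star \bd^s = \bd^s$ with $\bd^s := \tr_{12}\bE^s$; a short index computation using the major/minor symmetries of $\bE$ yields $\bd^s = \tfrac{1}{3}(\bd + 2\bv)$. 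By proposition~\ref{prop:Axial_Second_Order} (item~1), $\bnu$ is therefore an eigenvector of $\bd + 2\bv$.

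Combining the two eigenvector statements, $\bnu$ is an eigenvector of both $\bd + 2\bv$ and $\bd - \bv$, hence of every linear combination; in particular $3\bd = 2(\bd - \bv) + (\bd + 2\bv)$ and $3\bv = (\bd + 2\bv) - (\bd - \bv)$ show that $\bnu$ is a common eigenvector of $\bd$ and $\bv$, which gives $(\bd\cdot\bnu)\times\bnu = 0$. Conversely, assuming $(\bd\cdot \bnu)\times \bnu = 0$ alongside $\bs(\bnu)\star \bE^s = \bE^s$, the same two-dimensional linear algebra yields that $\bnu$ is an eigenvector of $\bv$, hence of $\bd-\bv$, hence of $\ba$. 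Modulo the first condition in~\eqref{eq:CondEla}, the conditions $(\ba\cdot\bnu)\times \bnu = 0$ and $(\bd\cdot\bnu)\times \bnu = 0$ are therefore equivalent, which proves~\eqref{eq:CondEla}. The equivalence between plane and axial symmetry is automatic since $\br(\bnu,\pi) = -\bs(\bnu)$ and $\bE$ has even order. The only real obstacle in this scheme is the bookkeeping of the trace identities relating $\ba$, $\bd$, $\bv$ and $\bd^s$; once those are in place, the reduction is elementary.
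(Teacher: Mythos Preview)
Your proof is correct and follows essentially the same approach as the paper: both use the equivariant decomposition $\bE \simeq (\bE^{s},\ba)$, invoke theorem~\ref{thm:S4toS6}(1) and proposition~\ref{prop:Axial_Second_Order}(1), and then trade $(\ba\cdot\bnu)\times\bnu=0$ for $(\bd\cdot\bnu)\times\bnu=0$ via the trace $\tr\bE^{s}=\tfrac{1}{3}(\bd+2\bv)$. The only cosmetic difference is that the paper writes this trace directly as $\bd-\tfrac{1}{3}\ba$ (up to a multiple of $\id$, which is irrelevant for the cross-product condition) and thereby swaps $\ba$ and $\bd$ in one line, whereas you detour through $\bv$ and the pair $\bd-\bv$, $\bd+2\bv$; both arguments are equivalent.
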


\begin{proof}
  From the equivariant decomposition $\bE \simeq (\bE^{s}, \ba)$, we get
  \begin{equation*}
    \bs(\bnu)\star \bE \simeq (\bs(\bnu)\star\bE^{s}, \bs(\bnu)\star \ba), \quad \forall g \in \OO(3).
  \end{equation*}
  Hence, $\bs(\bnu)$ is a plane symmetry of $\bE$ if and only if $\bs(\bnu)$ is a plane symmetry of both $\bE^{s}$ and $\ba$. By proposition \ref{prop:Axial_Second_Order} and theorem \ref{thm:S4toS6}, this recasts as
  \begin{equation}\label{eq:CondEla_a}
    \begin{cases}
      \Big[\bE^{s}\cdot \bnu -3 \bnu \odot \left(\bnu\cdot \bE^s\cdot \bnu\right)\Big]\times \bnu=0,
      \\
      \left(\ba\cdot \bnu\right)\times \bnu =0.
    \end{cases}
  \end{equation}
  Suppose now that $\bs(\bnu)$ is the normal to a common plane symmetry of $\bE^{s}$ and $\ba$, so that $\bs(\bnu)$ is also a plane symmetry of
  \begin{equation}\label{eq:Trace_Es}
    \tr \bE^{s}=\frac{1}{3}( \bd+2 \bv)= \bd - \frac{1}{3}\ba.
  \end{equation}
  Therefore, by proposition~\ref{prop:Axial_Second_Order}, we get
  \begin{equation*}
    \left(\Big(\bd - \frac{1}{3}\ba\Big)\cdot \bnu\right)\times \bnu= \left(\bd\cdot \bnu\right)\times \bnu-\frac{1}{3} \left(\ba\cdot \bnu\right)\times \bnu=0 \Rightarrow (\bd\cdot \bnu)\times \bnu=0.
  \end{equation*}
  Conversely, if equations~\eqref{eq:CondEla} holds, we deduce from theorem~\ref{thm:S4toS6} that $\bs(\bnu)$ is a plane symmetry of $\bE^{s}$, and thus a plane symmetry of $\tr \bE^{s}$. Using proposition~\ref{prop:Axial_Second_Order} applied to $\tr \bE^{s}$ and by~\eqref{eq:Trace_Es}, we deduce that $(\ba\cdot \bnu)\times \bnu=0$. Hence, $\bs(\bnu)$ is also a plane symmetry of $\ba$, which ends the proof.
\end{proof}

The first condition in theorem \ref{thm:elasticity} is polynomial of degree $n=4$ in $\bnu$ (instead of degree $2n=8$ for the genuine algebraic condition $\bs(\bnu)\star \bE=\bE$). It is ranked in a totally symmetric third-order tensor (instead of fourth-order tensor for the genuine algebraic condition). The second condition is polynomial of degree two in $\bnu$ and is ranked in a vector.

Theorem \ref{thm:elasticity} determines the normals to all the plane symmetries of an elasticity tensor in an arbitrary orthonormal basis (and this for any symmetry class and/or any particular relationship between material parameters). It is a real improvement compared to Cowin--Mehrabadi theorems, since it does not involve anymore unit vectors $\btau$ perpendicular to $\bnu$. For instance in the particular case $\bd'=\bv'=\ba'=0$ of an elasticity tensor with spherical dilatation and Voigt's tensors, the solutions of the first equation in theorem~\ref{thm:elasticity} are the normals to all the plane symmetries of $\bE$, whereas the solutions of the first condition $\left[ \left(\bnu\cdot \bE\cdot \bnu\right)\cdot\bnu\right] \times \bnu=0$ in Cowin-Mehrabadi's theorem, the condition on $\bnu$ only, gives extra vectors $\bnu$ that are not necessary normals to plane symmetries of $\bE$.

\section{Plane and axial symmetries of Piezo-electricity}
\label{sec:piezoelectricity-tensors}

Consider now the linear piezo-electricity constitutive equations
\begin{equation}\label{eq:EPeps}
  \begin{cases}
    \bepsilon= \bE^{-1}:\bsigma+  \vec E \cdot \bP
    \\
    \vec D=\bP:\bsigma+ \pmb\varepsilon_0^\sigma\cdot \vec E
  \end{cases}
  ,
  \qquad
  \begin{cases}
    \epsilon_{ij}= (\bE^{-1})_{ijkl}\sigma_{kl}+ P_{kij} E_k
    \\
    D_i=P_{ikl}\sigma_{kl}+ \varepsilon_{0\, ik}^\sigma E_k
  \end{cases}
\end{equation}
between the electric field $\vec E$, the (symmetric) second-order stress tensor $\bsigma$, the electric displacement $\vec D$ and the (symmetric) second-order strain tensor $\bepsilon$,
where
\begin{itemize}
  \item $\bE^{-1}\in \Ela$ is the compliance fourth-order tensor (the inverse of the elasticity tensor, with the same index symmetries),
  \item $\bP\in \Piez$ is the piezo-electricity third-order tensor,
  \item $\pmb\varepsilon_0^\sigma\in \Perm$ is the (symmetric) second-order dielectric permittivity tensor.
\end{itemize}
Each piezo-electricity tensor $\bP$ has the index symmetry $P_{ijk}=P_{ikj}$, \textit{i.e.} it is such as $\bP^{(23)}= \bP$, where the notation $\bT^{(23)}$ means here the
symmetrized of a third-order tensor $\bT$ with respect to the second and third
subscripts,
\begin{equation}\label{eq:tT}
  (\bT^{(23)})_{ijk}:= \frac{1}{2}(T_{ijk}+T_{ikj})
\end{equation}

One can study independently the plane/axial symmetries of each constitutive tensor $\bE$ (fourth-order), $\bP$ (third-order) and $\pmb\varepsilon_0^\sigma$ (second-order) or may look for the plane/axial symmetries of the whole set $(\bE,\bP,\pmb\varepsilon_0^\sigma)$. The latest problem comes down to finding unit vectors $\bnu$ such that
\begin{equation*}
  (\bs(\bnu)\star \bE,\bs(\bnu)\star \bP,\bs(\bnu)\star \pmb\varepsilon_0^\sigma)=(\bE,\bP,\pmb\varepsilon_0^\sigma)
  \quad \text{for plane symmetries},
\end{equation*}
or
\begin{equation*}
  (\br(\bnu,\pi)\star \bE,\br(\bnu,\pi)\star \bP,\br(\bnu,\pi)\star \pmb\varepsilon_0^\sigma)=(\bE,\bP,\pmb\varepsilon_0^\sigma)
  \quad \text{for axial symmetries}.
\end{equation*}

As in the case of elasticity tensors, we will use here some equivariant decomposition. Recall first the explicit equivariant decomposition of $\bE$
\begin{equation*}
  \bE = \bE^{s}+\id \otimes_{2,2} \ba,
\end{equation*}
where $\bE^{s}$ is a totally symmetric fourth-order tensor and $\ba$ is a symmetric second-order tensor (see~\autoref{sec:elasticity-tensors}). We have a similar decomposition for $\bP$ (related to its harmonic decomposition, see \cite{GW2002})
\begin{equation}\label{eq:EquivP}
  \bP\in \Piez\mapsto (\bP^{s},\bh,\ww)
  \qquad
  \begin{cases}
    \ww := \frac{3}{4} \left(\tr_{23}( \bP)-\tr (\bP^{s})\right)\in \HH^{1}(\RR^{3}),
    \\
    \bh := \frac{2}{3}\left(\lc:\bP\right)^{s}\in \HH^{2}(\RR^{3}),
  \end{cases}
\end{equation}
where $\lc$ is the third-order Levi-Civita symbol (with components $\varepsilon_{ijk}=\det(\ee_{i}, \ee_{j}, \ee_{k})$ in any direct orthonormal basis $(\ee_{i})$), not to be confused with dielectric permittivity tensor $\pmb \varepsilon_0^\sigma\in \Sym^2(\RR^3)$, nor with the strain tensor $\bepsilon\in \Sym^2(\RR^3)$, so that $(\lc:\mathbf{P})_{ij}=\varepsilon_{ikl}P_{klj}$. Here, $\HH^{k}(\RR^3)$ stands for the vector space of harmonic tensors of order $k$, harmonic meaning totally symmetric and traceless. In this decomposition, the component $\bP^{s}\in \Sym^{3}(\RR^3) $ is the totally symmetric part of $\bP$
\begin{equation*}
  (\bP^{s})_{ijk} = \frac{1}{3}(P_{ijk}+P_{jik}+P_{kij}).
\end{equation*}
As detailed in~\cite{Spe1970,JCB1978,GW2002}, the explicit decomposition of $\bP$ writes as
\begin{equation*}
  \bP = \bP^{s}+  \ww\otimes\id- \ww \odot\id+(\lc\cdot\bh)^{(23)},
\end{equation*}
where
$(\lc\cdot \bh)^{(23)}_{ijk}
  =\frac{1}{2}((\lc\cdot \bh)_{ijk}+(\lc\cdot \bh)_{ikj})=\frac{1}{2}(\varepsilon _{ijl} h_{lk}+\varepsilon _{ikl} h_{lj})
$ (see~\eqref{eq:tT}). This decomposition~\eqref{eq:EquivP} is equivariant relatively to the action of the orthogonal group given by
\begin{equation*}
  g \star \bP\simeq ( g\star \bP^{s}, g \,\hat \star\, \bh, g\star \ww)\; \qquad g\in \OO(3),
\end{equation*}
where $g \,\hat \star\, \bh:=(\det g) g \star \bh$. In other words, $\bP^{s}$ is a third-order tensor, $\ww$ is a vector but $\bh$ is a second-order \emph{pseudo-tensor} (this is due to the contraction with Levi-Civita tensor in its definition). In these formulas, $\hat \star=\det () \, \star$ is so-called \emph{twisted} action, the one to be considered for pseudo-tensors of any order.

Finally one can recapitulate all these decompositions into the following decomposition of the triplet $(\bE, \bP, \pmb\varepsilon_0^\sigma)\in \Ela \oplus \Piez \oplus \Perm$
\begin{equation*}
  (\bE, \bP, \pmb\varepsilon_0^\sigma)\simeq (\bE^{s}, \bP^{s}, \pmb\varepsilon_0^\sigma, \ba, \bh, \ww) ,
\end{equation*}
\textit{i.e.} as a decomposition into a totally symmetric tensors $\bE^{s}$ (of order 4), $\bP^{s}$ (of order 3), $\pmb\varepsilon_0^\sigma$ and $\ba$ (of order 2), an harmonic pseudo-tensor $\bh$ (of order 2) and a vector $\ww$,
such that
\begin{equation*}
  g \star (\bS, \bP^{s}, \pmb\varepsilon_0^\sigma, \ba, \bh, \ww) =
  ( g\star\bS,  g\star\bP^{s},  g\star\pmb\varepsilon_0^\sigma, g\star\ba, g \,\hat \star \, \bh,  g\star\ww)
  \;\forall g\in \OO(3).
\end{equation*}

Following the same proof as for theorem~\ref{thm:elasticity} with the use of reduced equations for totally symmetric tensor
(and using point (2) of proposition~\ref{prop:Axial_Second_Order} for second-order pseudo-tensor $\bh$), we obtain reduced equations for the existence of plane/axial symmetries for the linear piezo-electricity constitutive equations \eqref{eq:EPeps}.

\begin{thm}\label{thm:piezo-electricity}
  Let $(\bE, \bP,\pmb\varepsilon_0^\sigma) \in \Ela \oplus \Piez \oplus \Perm$ be the triplet of elasticity, piezo-electricity and dielectric permittivity tensors, where $\bE^{s}$ and $\bP^{s}$ are the totally symmetric part of $\bE$ and $\bP$,
  \begin{equation*}
    \bd=\tr_{12}\bE, \qquad \bh = \frac{2}{3} \left(\lc:\bP\right)^{s},
    \qquad \ww = \frac{3}{4} \left(\tr_{23} (\bP)-\tr (\bP^{s})\right),
  \end{equation*}
  and $\lc$ is the third-order Levi-Civita symbol. Let $\bnu$ be a unit vector, then:
  \begin{enumerate}
    \item $\bs(\bnu)$ is a plane symmetry of $(\bE, \bP,\pmb\varepsilon_0^\sigma)$ if and only if
          \begin{equation}\label{eq:Planar_Piezo}
            \begin{cases}
              \left[\bE^{s} \cdot \bnu
                -3 \bnu\odot \left(\bnu\cdot\bE^{s} \cdot \bnu\right) \right]\times \bnu= 0,
              \\
              \bP^{s} \cdot \bnu
              -2 \bnu\odot \left(\bnu\cdot\bP^{s} \cdot \bnu\right) = 0,
              \\
              (\pmb\varepsilon_0^\sigma\cdot \bnu)\times \bnu=0,
              \\
              (\bd\cdot \bnu)\times \bnu=0,
              \\
              \bh-2\bnu\odot (\bh\cdot \bnu) = 0, \\
              \ww\cdot \bnu = 0.
            \end{cases}
          \end{equation}
    \item $\br(\bnu,\pi)$ is an axis symmetry axis of $(\bE, \bP,\pmb\varepsilon_0^\sigma)$ if and only if
          \begin{equation}\label{eq:Axis_Piezo}
            \begin{cases}
              \left[\bE^{s} \cdot \bnu
                -3 \bnu\odot \left(\bnu\cdot\bE^{s} \cdot \bnu\right) \right]\times \bnu= 0,
              \\
              \left[\bP^{s} -3 \bnu \odot (\bP^{s}\cdot \bnu)\right]\times \bnu = 0,
              \\
              (\pmb\varepsilon_0^\sigma\cdot \bnu)\times \bnu=0,
              \\
              (\bd\cdot \bnu)\times \bnu=0,
              \\
              (\bh\cdot\bnu) \times \bnu = 0,
              \\
              \ww\times  \bnu = 0.
            \end{cases}
          \end{equation}
  \end{enumerate}
\end{thm}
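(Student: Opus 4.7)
The proof proposal follows the same strategy used for theorem~\ref{thm:elasticity}: exploit the fact that the decomposition
\[
(\bE,\bP,\pmb\varepsilon_0^\sigma)\simeq (\bE^{s},\bP^{s},\pmb\varepsilon_0^\sigma,\ba,\bh,\ww)
\]
is equivariant under $\OO(3)$ (with the understanding that $\bh$ transforms under the twisted action $\hat\star$, since it is a second-order pseudo-tensor). Consequently, $\bs(\bnu)\star(\bE,\bP,\pmb\varepsilon_0^\sigma)=(\bE,\bP,\pmb\varepsilon_0^\sigma)$ holds if and only if $\bs(\bnu)$ fixes each of the five genuine components while $\bs(\bnu)\,\hat\star\,\bh=\bh$ for the pseudo-tensor; the analogous statement with $\br(\bnu,\pi)$ gives the axial case. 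The proof will then decompose into six independent criteria, each of which is handled by a result already proved in the paper.

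The plan is first to convert each of these six conditions into its reduced algebraic form. For $\bE^{s}\in\Sym^{4}(\RR^{3})$ I would invoke point (1) of theorem~\ref{thm:S4toS6}. For the totally symmetric third-order part $\bP^{s}$, I would apply theorem~\ref{thm:S3plane} in the plane-symmetry case and theorem~\ref{thm:S3axial} in the axial-symmetry case; this is where the two statements~\eqref{eq:Planar_Piezo} and~\eqref{eq:Axis_Piezo} diverge. For the even-order tensors $\pmb\varepsilon_0^\sigma$ and $\ba$, point (1) of proposition~\ref{prop:Axial_Second_Order} gives $(\pmb\varepsilon_0^\sigma\cdot\bnu)\times\bnu=0$ and $(\ba\cdot\bnu)\times\bnu=0$. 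For the pseudo-tensor $\bh$, the plane-symmetry condition $\bs(\bnu)\,\hat\star\,\bh=\bh$ amounts to $\bs(\bnu)\star\bh=-\bh$, which by point (2) of proposition~\ref{prop:Axial_Second_Order} is equivalent to $\bh-2\bnu\odot(\bh\cdot\bnu)=0$; in the axial case, since $\br(\bnu,\pi)\in\SO(3)$ the twisted action coincides with the standard one and $\bh$ is of even order, so point (1) of proposition~\ref{prop:Axial_Second_Order} gives $(\bh\cdot\bnu)\times\bnu=0$. For the vector $\ww$, direct computation with $\bs(\bnu)=\Id-2\bnu\otimes\bnu$ yields $\ww\cdot\bnu=0$ for the plane case and $\ww\times\bnu=0$ for the axial case.

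The final step is to eliminate the $\ba$-equation in favor of the $\bd$-equation, exactly as was done in the proof of theorem~\ref{thm:elasticity}. Using $\tr\bE^{s}=\tfrac{1}{3}(\bd+2\bv)=\bd-\tfrac{1}{3}\ba$ (equation~\eqref{eq:Trace_Es}), one has: if $\bs(\bnu)$ (resp.\ $\br(\bnu,\pi)$) fixes $\bE^{s}$ and $\ba$, then it fixes $\tr\bE^{s}$, hence $\bd$, and proposition~\ref{prop:Axial_Second_Order} gives $(\bd\cdot\bnu)\times\bnu=0$; conversely, if $\bE^{s}$ is fixed and $(\bd\cdot\bnu)\times\bnu=0$, the same identity forces $(\ba\cdot\bnu)\times\bnu=0$. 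This substitution replaces the covariant $\ba$, which is rarely introduced explicitly in the mechanics literature, by the familiar dilatation tensor $\bd=\tr_{12}\bE$, matching the form of the announced systems~\eqref{eq:Planar_Piezo} and~\eqref{eq:Axis_Piezo}.

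The main obstacle is bookkeeping rather than mathematical depth: one must be careful that the action on the pseudo-tensor $\bh$ really reverses sign for plane symmetries (so that point (2) of proposition~\ref{prop:Axial_Second_Order}, not point (1), is the relevant one), and that both implications are addressed for each of the six conditions — in particular the passage back from the reduced algebraic equations to the genuine invariance $\bs(\bnu)\star\,\cdot\,=\,\cdot$ relies on the converse halves of theorems~\ref{thm:S3plane},~\ref{thm:S3axial} and~\ref{thm:S4toS6}, all of which have already been established. Once these bookkeeping checks are in place, the proof assembles immediately from the results of the preceding sections.
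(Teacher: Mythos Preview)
Your proposal is correct and follows essentially the same approach as the paper, which does not give a formal proof but simply states that one should follow the proof of theorem~\ref{thm:elasticity}, apply the reduced equations for totally symmetric tensors, and use point (2) of proposition~\ref{prop:Axial_Second_Order} for the pseudo-tensor $\bh$. You have spelled out precisely these steps, including the careful distinction between the plane and axial cases for $\bP^{s}$, $\bh$, and $\ww$, and the substitution of $(\ba\cdot\bnu)\times\bnu=0$ by $(\bd\cdot\bnu)\times\bnu=0$ via~\eqref{eq:Trace_Es}.
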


Theorem~\ref{thm:piezo-electricity} is constructive, with no equation involving vectors $\btau$ perpendicular to the normal/axis $\bnu$. The sets of equations~\eqref{eq:Planar_Piezo} and~\eqref{eq:Axis_Piezo} determine all plane/axial symmetries of a triplet $(\bE, \bP,\pmb\varepsilon_0^\sigma)$, independently of their symmetry class. Compared to the standard action of an orthogonal transformation, the degree of the polynomials in $\bnu$ are again reduced by a factor 2.

\section{Conclusion}

We have established necessary and sufficient conditions for a unit vector $\bnu$ to be a normal to a symmetry plane or the axis of a rotational symmetry of a constitutive tensor for main problems in continuum mechanics. These conditions are \emph{reduced} since they are of degree $n$ in $\bnu$, rather than $2n$ as it is the case for the genuine equations. They are coordinate free and constructive: one can use them to compute explicitly the vectors $\bnu$ by solving some algebraic equations. These conditions have been summarized as theorems~\ref{thm:S3plane}, \ref{thm:S3axial} and \ref{thm:S4toS6} for totally symmetric tensors of order three up to six. They have been extended to the elasticity tensor, as theorem \ref{thm:elasticity}, and to the linear piezo-electricity, as theorem~\ref{thm:piezo-electricity}, using the harmonic decomposition.

Finally, these equations are related to the attempt to \emph{generalize the concept of eigenvectors} to higher order tensors~\cite{Qi2007,CS2013} and we will conclude this paper by a discussion on this topic. For a second-order symmetric tensor $\ba$, a vector $\bnu$ is an eigenvector of $\ba$ if and only if $(\ba\cdot \bnu)\times \bnu=0$. In that case, it defines clearly an axis of symmetry of $\ba$. For a tensor $\bS$ of order $n \ge 3$, Qi, in~\cite{Qi2007}, has defined a unit vector $\bnu$ to be a \emph{Z-eigenvector} if it is a solution of
\begin{equation*}
  \bS\rcont{n-1}\bnu^{n-1}=\lambda \bnu \iff \left( \bS\rcont{n-1}\bnu^{n-1}\right)\times \bnu=0.
\end{equation*}
By point (2) of lemma~\ref{lem:Sym_With_Contraction}, each unit vector $\bnu$ defining an axial symmetry of $\bS$ is necessary a $Z$-eigenvector but the converse is false in general. Therefore, one could weaken Qi's definition of a \emph{generalized eigenvector} and simply define it a generator a symmetry axis of $\bS$. We will call such a $\bnu$ as an \emph{A-eigenvector} of $\bS$.

The \emph{characteristic equations} for a vector $\xx$ to be an \emph{A-eigenvector} of $\bS$, are thus deduced from present paper because our equations are homogeneous. For instance for totally symmetric tensors of order $n=3$ up to order $6$ (setting $\xx=\norm{\xx} \bnu$ in Theorems~\ref{thm:S3axial} and \ref{thm:S4toS6} ), we get
\begin{align*}
   & \textrm{$\bS$ of order 3:} &  &
  \Big[\norm{\xx}^2 \bS -3 \xx \odot (\bS\cdot \xx)\Big]\times \xx=0,
  \\
   & \textrm{$\bS$ of order 4:} &  & \Big[\norm{\xx}^2\, \bS\cdot \xx -3 \xx \odot \left(\xx\cdot \bS\cdot \xx\right)\Big]\times \xx=0,
  \\
   & \textrm{$\bS$ of order 5:} &  & \Big[\|\xx \|^4 \,\bS -5\|\xx \|^2\, \xx \odot (\bS\cdot \xx)
    + \frac{20}{3} \xx \odot \xx \odot
    (\xx\cdot \bS\cdot \xx)
    \Big]\times \xx = 0,
  \\
   & \textrm{$\bS$ of order 6:} &  & \big[\norm{\xx}^4\, \bS\cdot \xx  -5 \norm{\xx}^2 \, \xx \odot \left(\xx\cdot \bS\cdot \xx\right)
    + \frac{20}{3} \xx \odot \xx \odot \big(\left(\xx\cdot \bS\cdot \xx\right)\cdot \xx\big) \big]\times \xx=0.
\end{align*}
Similar results for totally symmetric tensors of any order are available in~\cite{ODKD2019}.

\section{Acknowledgments*}

The authors warmly thank the referees for very helpful comments on generalized eigenvectors and for communicating us significative additional references.


\end{document}